\documentclass[11pt]{article}   
\usepackage[dvips]{graphicx}
\usepackage{amsmath,amssymb}
\usepackage{algorithm}
\usepackage{algorithmic}

\newtheorem{theorem}{Theorem}
\newtheorem{lemma}{Lemma}
\newtheorem{corol}{Corollary}
\newtheorem{pro}{Proposition}

\pagestyle{plain}
\date{}

\textheight 21,0cm \textwidth 16cm
 \footskip 20pt
 \oddsidemargin 0,46cm \evensidemargin 0,46cm \marginparwidth 0cm

\newenvironment{proof}[1][\hspace{-1.0ex}]%
{\par\addvspace{1mm}{\it Proof\hspace{1.0ex}{#1}.} }%
{\quad$\blacktriangle$\par\addvspace{1mm}}

    \newif\ifNoRemark
    \def\addtheorem#1#2#3#4{ 
    \ifthenelse{\expandafter\isundefined\csname the#2\endcsname}{\newcounter{#2}}{}
    \newenvironment{#1}[1][\global\NoRemarktrue]
     {\par\addvspace{2mm}\noindent 
       \refstepcounter{#2}{\bf #3~\csname the#2\endcsname
      \vphantom{##1}\ifNoRemark.\ \else\ (##1).\fi}\begingroup #4}%
     {\endgroup\par\addvspace{1mm}\global\NoRemarkfalse}
    \expandafter\newcommand\csname b#1\endcsname{\begin{#1}}
    \expandafter\newcommand\csname e#1\endcsname{\end{#1}}
    }

\begin{document}

\title{On  $q$-ary Bent and Plateaued Functions \footnote{\,The work was supported by the program of fundamental scientific
researches of the SB RAS  I.5.1, project N 0314-2019-0017}}

\author{ Vladimir N. Potapov\\
Sobolev Institute of Mathematics
\\ Novosibirsk, Russia \\ vpotapov@math.nsc.ru\\}

\maketitle

\begin{abstract}
 We obtain the following results. For any prime $q$ the minimal Hamming distance  between distinct regular $q$-ary bent
functions of $2n$ variables is equal to $q^n$. The number of $q$-ary
regular bent functions at the distance $q^n$ from the quadratic bent
function $Q_n=x_1x_2+\dots+x_{2n-1}x_{2n}$ is equal to
$q^n(q^{n-1}+1)\cdots(q+1)(q-1)$ for $q>2$.
 The Hamming distance  between distinct binary $s$-plateaued
  functions of $n$ variables is not less than
$2^{\frac{s+n-2}{2}}$ and the Hamming distance between distinct
ternary $s$-plateaued
  functions of $n$ variables is not less than
$3^{\frac{s+n-1}{2}}$.  These bounds are tight.

 For $q=3$ we
prove an upper bound on nonlinearity of ternary functions in terms
of their correlation immunity. Moreover,  functions reaching this
bound are plateaued. For $q=2$ analogous result are well known but
for large $q$ it seems impossible. Constructions and some properties
of $q$-ary plateaued functions are discussed.

\textit{Keywords} --- plateaued function, bent function, correlation
immune function, Hamming distance, nonlinearity.

MSC2010: 94A60, 94C10, 06E30
\end{abstract}

\section{Introduction}

Boolean bent and plateaued  functions play a significant role in
information theory and combinato\-rics. These functions
 are intensively studied at present as they have numerous applications
 in cryptography, coding theory, and other areas. Bent functions are
 known as Boolean functions with  maximal  nonlinearity.
  $q$-Ary  generalizations
 of bent functions are also interesting mathe\-ma\-tical object (see \cite{Tok2}).
 Boolean plateaued
functions generalize  functions with maximal nonlinearity. Moreover,
some Boolean plateaued functions achieve tradeoff between properties
of nonlinearity and correlation immunity. Recently Mesnager at al.
\cite{Mes} redefined plateaued functions over any finite field $F_q$
where $q$ is a prime power, and established their properties. In
this paper (Section 6) we generalize some methods in order to
construct binary plateaued functions for $q$-ary plateaued
functions.

The Hamming distance $d(f,g)$ between two discrete functions $f$ and
$g$ is the number of arguments where these functions  differ. In
other words, the Hamming distance between two functions $f$ and $g$
is the cardinality of the support $\{x\in Dom(f) \ |\ f(x) \neq g(x)
\}$ of their difference. The problem of  finding  the minimal
Hamming distance between two functions of the same type is known as
the problem of the minimum-support bitrade (see \cite{Kro16}). A
series of results on calculation of the minimal Hamming distance
between Boolean bent and correlation immune functions can be found
in  \cite{Kol2} and \cite{Pot12'}. In this paper (Section 3) we
prove that the minimal Hamming distance between distinct $q$-ary
regular bent functions of $2n$ variables is equal to $q^n$ for any
prime $q$. We show (Section 5) that the number of $q$-ary regular
bent functions at  distance $q^n$ from the quadratic bent function
$Q_n=x_1x_2+\dots+x_{2n-1}x_{2n}$ is equal to
$q^n(q^{n-1}+1)\cdots(q+1)(q-1)$ for all prime $q>2$. In  binary
case the analogous statement was proved in \cite{Kol1}. Moreover we
prove (Section 4) that the Hamming distance between different binary
and ternary $s$-plateaued functions of $n$ variables is at least
$2^{\frac{s+n-2}{2}}$ and $3^{\frac{s+n-1}{2}}$ respectively. We
also construct pairs of  functions at the minimal distance and
investigate their properties\footnote{\,These results were reported
on XV and XVI International Symposia "Problems of Redundancy in
Information and Control Systems".}.

 Nonlinearity and
correlation immunity are well-known cryptographic properties of
discrete functions. In  binary case  these properties are opposite
to each other. Tarannikov \cite{Tar0}  proved the inequality
$nl(f)\leq 2^{n-1}-2^{cor(f)+1}$ for balanced Boolean functions
$f:F^n_2\rightarrow F_2$, where $nl(f)$ is the nonlinearity and
$cor(f)\leq n-2$ is the correlation immunity of $f$. Moreover, he
proved that if the equality holds then $f$ is a plateaued function.
We prove (Section 8) a similar result for ternary functions acting
from $F^n_3$ to $F_3$: $nl(f)\leq 2\cdot3^{n-1}-3^{cor(f)-1}$ and
 $f$ is a ternary plateaued function whenever the equality holds.
Therefore we can use plateaued ternary functions to achieve
tradeoffs between properties of nonlinearity and correlation
immunity. Moreover, we prove (Section 7) an upper bound
$2\cdot3^{n-1}-3^{\frac{n}{2}-1}$ for nonlinearity of arbitrary
ternary functions. The bound is achieved on bent functions of a
special type.

We believe that  the analogous results are impossible for large $q$.
In the last Section  9  we construct quaternary functions having
large both nonlinearity and correlation immunity.

\section{Fourier transform on finite abelian groups}

Let  $G$ be a finite abelian group. Consider  the vector space
$V(G)$ consisting of functions $f:G\rightarrow \mathbb{C}$ with the
inner product
$$(f,g)=\sum\limits_{x\in G}f(x)\overline{g(x)}.$$
A function $f:G\rightarrow \mathbb{C}\backslash\{0\}$ mapping $G$ to
the non-zero complex numbers is called a character of $G$ if it is a
group homomorphism from $G$ to $\mathbb{C}$, i.e.,
$\phi(x+y)=\phi(x)\phi(y)$ for each $x,y\in G$. The set of
characters of an abelian group is an orthogonal basis of $V(G)$. If
$G=(\mathbb{Z}/q\mathbb{Z})^n$  then we can define characters of
$(\mathbb{Z}/q\mathbb{Z})^n$ by equation $\phi_z(x)=\xi^{\langle
x,z\rangle}$, where $\xi=e^{2\pi i/q}$ and $\langle
x,y\rangle=x_1y_1+\dots+x_ny_n\, {\rm mod}\, q$ for each $z\in
(\mathbb{Z}/q\mathbb{Z})^n$.  We may define the Fourier transform of
$f\in V(G)$ by the formula $\widehat{f}(z)=(f,\phi_z)/|G|^{1/2}$,
i.e., $\widehat{f}(z)$ is the coefficients of the expansion of $f$
in the basis of characters. Parseval identity $(f,f)=\|f\|^2=
\|\widehat{f}\|^2$ and the Fourier inversion formula
$\widehat{(\widehat{f(x)})}= f(-x)$ hold. A proof of the following
equation  can be found in  \cite{Tao}.

\begin{pro}[uncertainty principle]\label{sbent1} For every $f\in V(G)$
the following inequality is true:
\begin{equation}\label{ebent1}
 |{\rm
supp}(f)|\cdot|{\rm supp}(\widehat{f})|\geq |G|.
\end{equation}
 \end{pro}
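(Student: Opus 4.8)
The plan is to prove \eqref{ebent1} by the classical Donoho--Stark argument: play an $\ell^1$ bound for the Fourier coefficients against Parseval's identity. We may assume $f\not\equiv 0$, since otherwise there is nothing to prove; write $A=\operatorname{supp}(f)$ and $B=\operatorname{supp}(\widehat f)$.

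First I would bound each Fourier coefficient pointwise. Since $\widehat f(z)=(f,\phi_z)/|G|^{1/2}=|G|^{-1/2}\sum_{x\in G}f(x)\overline{\phi_z(x)}$ and every character satisfies $|\phi_z(x)|=1$, the triangle inequality gives $|\widehat f(z)|\le |G|^{-1/2}\sum_{x\in A}|f(x)|$. Applying the Cauchy--Schwarz inequality to this sum over the support $A$ yields
$$|\widehat f(z)|\ \le\ |G|^{-1/2}\,|A|^{1/2}\,\Big(\sum_{x\in A}|f(x)|^2\Big)^{1/2}\ =\ |G|^{-1/2}\,|A|^{1/2}\,\|f\|.$$

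Next I would square this bound, sum it over $z\in B$ (the coefficients vanish off $B$ by definition of the support), and invoke Parseval:
$$\|f\|^2=\|\widehat f\|^2=\sum_{z\in B}|\widehat f(z)|^2\ \le\ |B|\cdot|G|^{-1}\,|A|\,\|f\|^2.$$
Since $f\not\equiv 0$ we have $\|f\|>0$, and dividing through by $\|f\|^2$ gives $|A|\cdot|B|\ge |G|$, which is exactly \eqref{ebent1}.

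There is essentially no obstacle here: every step is a one-line application of a standard inequality, and the only point requiring a little care is keeping track of the normalising factor $|G|^{-1/2}$ dictated by the chosen conventions (characters of norm $|G|^{1/2}$, Fourier transform divided by $|G|^{1/2}$). An equivalent route, should one prefer it, is to expand $f$ in the character basis via the inversion formula $\widehat{\widehat f}(x)=f(-x)$, estimate $\|f\|_\infty\le |G|^{-1/2}|B|^{1/2}\|\widehat f\|$ by Cauchy--Schwarz on the sum over $B$, and then combine with the trivial bound $\|f\|^2\le |A|\,\|f\|_\infty^2$; this produces the same inequality.
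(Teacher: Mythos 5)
Your proof is correct. The paper does not include its own argument for this proposition --- it simply cites Tao's article --- so there is nothing internal to compare against; but your Donoho--Stark argument (the pointwise bound $|\widehat f(z)|\le |G|^{-1/2}|A|^{1/2}\|f\|$ via the triangle inequality and Cauchy--Schwarz, summed over ${\rm supp}(\widehat f)$ and played against Parseval) is the standard complete proof of exactly the multiplicative inequality stated, valid for any finite abelian group, and every step checks out with the paper's normalisation conventions. The only remark worth adding is that the cited reference actually proves a much stronger additive statement $|{\rm supp}(f)|+|{\rm supp}(\widehat f)|\ge |G|+1$ for cyclic groups of prime order, which the paper never uses; your self-contained elementary argument suffices for everything the proposition is invoked for, including the characterisation of the equality case via indicator functions of cosets.
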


 If $H $ is any subgroup of $G$, and we set $f$ to be the
characteristic function of $H$, then it is easy to see that $|{\rm
supp}(f)|=|H|$ and $|{\rm supp}(\widehat{f})| = |G|/|H|$, so the
bound (\ref{sbent1}) is tight. One can show that up to the
symmetries of the Fourier transform (translation, modulation, and
homogeneity) this is the only way in which (\ref{sbent1}) can be
obeyed with equality.

Define the convolution of  $f\in V(G)$ and $g\in V(G)$  by the
equation
 $f*g(z)=\sum\limits_{x\in G}f(x)g(z-x)$. It is well known that

\begin{equation}\label{ebent2}
\widehat{f*g}=|G|^{1/2} \widehat{f}\cdot \widehat{g}.
\end{equation}

Further, we suppose that
  $q$ is a prime number and
 $G=(\mathbb{Z}/q\mathbb{Z})^n \simeq F^n_q$ is an $n$-dimensional vector space
 over Galois field  $F_q$.

\begin{corol}\label{cbent1}
The equation $|{\rm supp}(f)|\cdot|{\rm supp}(\widehat{f})|=q^n$
holds if and only if $f=c\phi_z\chi[\Gamma]$, where $z\in F^n_q$,
$c\in \mathbb{C}$ is a constant and $\chi[\Gamma]$ is the
characteristic function of an affine space  $\Gamma$ in  $F^n_q$.
\end{corol}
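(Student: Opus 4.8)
The plan is to prove Corollary~\ref{cbent1} by leveraging the tightness discussion following Proposition~\ref{sbent1}, together with explicit computation of Fourier transforms of characteristic functions of affine subspaces. First I would verify the ``if'' direction: if $\Gamma = a + H$ for a linear subspace $H \le F_q^n$ and $f = c\phi_z \chi[\Gamma]$, then modulation by $\phi_z$ translates $\widehat{f}$, multiplication by the constant $c$ scales it, and the characteristic function of a coset $a+H$ is a translate of $\chi[H]$, which under Fourier transform becomes a modulation. Since translations, modulations and nonzero scalings preserve the size of the support, it suffices to check $|{\rm supp}(\chi[H])| \cdot |{\rm supp}(\widehat{\chi[H]})| = q^n$. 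This is the standard fact: $\widehat{\chi[H]}(z)$ is a nonzero multiple of the characteristic function of the annihilator $H^\perp = \{z : \langle x, z\rangle = 0 \ \forall x \in H\}$, so $|{\rm supp}(\widehat{\chi[H]})| = |H^\perp| = q^n/|H|$, and the product is $q^n$.

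For the ``only if'' direction, the key is to reduce the general equality case to the subgroup case via the symmetries of the Fourier transform. Suppose $|{\rm supp}(f)| \cdot |{\rm supp}(\widehat{f})| = q^n$. The remark in the text asserts that equality in the uncertainty principle forces $f$ to be, up to translation, modulation and homogeneity (scalar multiplication), the characteristic function of a subgroup; since every subgroup of $F_q^n$ is a linear subspace and translating a linear subspace gives an affine subspace, this yields exactly the claimed form $f = c\phi_z \chi[\Gamma]$. So strictly speaking the corollary is a restatement of that remark in the case $G = F_q^n$, with ``subgroup'' specialized to ``linear subspace'' and ``translate of a subgroup'' to ``affine subspace''. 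If I wanted a self-contained argument rather than citing the remark, I would argue as follows: let $A = {\rm supp}(f)$ and $B = {\rm supp}(\widehat f)$ with $|A||B| = q^n$. By translating $f$ and modulating (which translates $\widehat f$), assume $0 \in A$ and $0 \in B$. The Fourier inversion formula $\widehat{\widehat{f}}(x) = f(-x)$ makes the roles of $f$ and $\widehat{f}$ symmetric.

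The main obstacle is establishing rigidity: showing that the support $A$ must actually be an affine (indeed linear, after our normalization) subspace and that $f$ is constant on it. Here is how I would push it through. Using the convolution identity \eqref{ebent2}, for $f$ and its ``conjugate reflection'' one gets that $|\widehat{f}|^2$ is, up to a constant, $\widehat{f \ast \tilde f}$ where $\tilde f(x) = \overline{f(-x)}$; the support of $f \ast \tilde f$ is contained in $A - A$, while by the equality case its Fourier transform $|\widehat f|^2$ has support exactly $B$ of size $q^n/|A|$. Applying the uncertainty principle again to $f \ast \tilde f$ and combining with $|A - A| \ge |A|$ (valid for any finite set, with equality iff $A$ is a coset of a subgroup — this is the finite-group analogue of the Cauchy--Davenport/Kneser equality case, and over $F_q^n$ it follows from a short direct argument) forces $A - A = A$ after normalization $0 \in A$, hence $A$ is a subgroup, i.e. a linear subspace $H$. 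A parallel argument applied to $\widehat f$ shows $B = H^\perp$. Finally, since $\widehat f$ is supported on $H^\perp$ and, by inversion and the subgroup structure, $f$ is supported on $H$ with $\widehat{f}$ obtained by the finite Fourier transform on the quotient group $F_q^n / H^\perp \cong H$, the constraint $|{\rm supp}(\widehat f)| = |H^\perp|$ (the maximum possible) forces $\widehat f$ to be nowhere zero on $H^\perp$; by the equality case of the one-dimensional uncertainty principle applied coordinate-wise on $H$, $f$ must be constant on $H$. Pulling back the normalizations (the translation becomes the shift making $\Gamma = a + H$, the modulation becomes the factor $\phi_z$, and the scaling becomes $c$) gives $f = c\phi_z \chi[\Gamma]$. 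I expect the sharp inequality $|A - A| \ge |A|$ with its equality characterization to be the step requiring the most care, and if a clean reference or the cited remark is available I would simply invoke it rather than reprove it.
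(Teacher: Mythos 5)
Your primary route is exactly what the paper does: the paper offers no proof of Corollary~\ref{cbent1} at all, deriving it immediately from the preceding remark that equality in (\ref{ebent1}) holds only for characteristic functions of subgroups up to translation, modulation and homogeneity (a fact it leaves to the cited literature). Your ``if'' direction via $\widehat{\chi[H]}$ being supported exactly on $H^\perp$ is correct and standard, and your observation that the ``only if'' direction is a specialization of that remark to $G=F_q^n$ is precisely the paper's (implicit) argument.

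However, the self-contained argument you sketch for the ``only if'' direction has a genuine gap at the $A-A$ step. Applying the uncertainty principle to $g=f*\tilde f$, whose Fourier transform is $|G|^{1/2}|\widehat f|^2$ with support exactly $B$, yields only the lower bound $|{\rm supp}(g)|\geq |G|/|B|=|A|$; combined with ${\rm supp}(g)\subseteq A-A$ this gives $|A-A|\geq |A|$, which holds trivially for every nonempty finite set. You never obtain the upper bound $|A-A|\leq |A|$ that would be needed to invoke the equality case of the coset characterization, so the conclusion ``$A-A=A$'' does not follow. The standard way to close this is not through $A-A$ but through the equality analysis inside the proof of the uncertainty principle itself: equality forces equality in $|\widehat f(z)|\leq |G|^{-1/2}\|f\|_1\leq |G|^{-1/2}|A|^{1/2}\|f\|_2$ for $z\in B$ and in $\|\widehat f\|_2^2\leq |B|\,\|\widehat f\|_\infty^2$, whence $|f|$ is constant on $A$, $|\widehat f|$ is constant on $B$, and for each $z\in B$ the phases of $f(x)\overline{\phi_z(x)}$ agree on $A$; comparing two elements of $B$ shows $\phi_{z-z'}$ is constant on $A$, which is what actually produces the coset structure of $A$ and the form $c\phi_z$ of $f$ on it. Your final step (``one-dimensional uncertainty applied coordinate-wise'') is likewise too vague to substitute for this. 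Since your stated fallback is to cite the remark, the proposal is acceptable as a whole, but the detailed argument as written would not survive scrutiny.
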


The following equality  can be found in   \cite{Sar} and \cite{Tsf}.
\begin{pro}\label{sbent2}
  If   $\Gamma$ is a linear subspace in
  $F^n_q$ and the subspace $\Gamma^\bot$ is  dual of $\Gamma$
  then it holds
$$\sum\limits_{y\in \Gamma}\widehat{f}(y)=q^{{\rm dim}(\Gamma) -n/2}\sum\limits_{x\in
\Gamma^\bot}{f}(x).$$
\end{pro}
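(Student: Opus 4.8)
The plan is to expand the left-hand side directly from the definition of the Fourier transform and then interchange the order of summation. Writing $\xi=e^{2\pi i/q}$, we have $\widehat{f}(y)=(f,\phi_y)/q^{n/2}=q^{-n/2}\sum_{x\in F^n_q}f(x)\xi^{-\langle x,y\rangle}$, so
$$\sum_{y\in\Gamma}\widehat{f}(y)=q^{-n/2}\sum_{x\in F^n_q}f(x)\sum_{y\in\Gamma}\xi^{-\langle x,y\rangle}.$$
All sums are finite, so this rearrangement is unproblematic; the content of the proof is entirely in evaluating the inner sum $S(x)=\sum_{y\in\Gamma}\xi^{-\langle x,y\rangle}$.

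The key observation is that for fixed $x$ the map $y\mapsto\xi^{-\langle x,y\rangle}$ is a character of the additive group $\Gamma$ (a restriction of the character $\phi_{-x}$ of $F^n_q$). This character is trivial on $\Gamma$ precisely when $\langle x,y\rangle=0$ for every $y\in\Gamma$, i.e. precisely when $x\in\Gamma^\bot$. Hence, by orthogonality of characters of the finite abelian group $\Gamma$, $S(x)=|\Gamma|=q^{\dim(\Gamma)}$ if $x\in\Gamma^\bot$ and $S(x)=0$ otherwise. Substituting this back yields
$$\sum_{y\in\Gamma}\widehat{f}(y)=q^{-n/2}\,q^{\dim(\Gamma)}\sum_{x\in\Gamma^\bot}f(x)=q^{\dim(\Gamma)-n/2}\sum_{x\in\Gamma^\bot}f(x),$$
which is the claimed identity.

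There is essentially no serious obstacle here; the only point requiring care is the dichotomy for $S(x)$, namely that a nontrivial character of a finite group sums to zero. One way to make this self-contained is to pick, for $x\notin\Gamma^\bot$, some $y_0\in\Gamma$ with $\xi^{-\langle x,y_0\rangle}\neq1$ and note that the substitution $y\mapsto y+y_0$ permutes $\Gamma$, giving $S(x)=\xi^{-\langle x,y_0\rangle}S(x)$ and hence $S(x)=0$. (Alternatively, the statement can be read off from Proposition~\ref{sbent1}'s tightness discussion applied to the characteristic function of $\Gamma^\bot$ together with (\ref{ebent2}), but the direct computation above is shorter.) One should also mention that the identity as stated tacitly uses the standard fact $\dim(\Gamma)+\dim(\Gamma^\bot)=n$, though it is not actually needed for the derivation.
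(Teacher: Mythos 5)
Your proof is correct. It differs mildly in packaging from the paper's: the paper does not give a direct computation but instead derives the identity from the convolution formula $\widehat{f}*\chi[\Gamma]=q^{{\rm dim}(\Gamma)}\widehat{f\cdot\chi[\Gamma^\bot]}$ (evaluated at $0$, using (\ref{ebent2}) together with $\widehat{\chi[\Gamma]}=q^{{\rm dim}(\Gamma)-n/2}\chi[\Gamma^\bot]$), and cites \cite{Sar} and \cite{Tsf} for the statement. You instead expand $\widehat{f}(y)$ from the definition, interchange the two finite sums, and evaluate the inner character sum over $\Gamma$ by orthogonality; the shift argument $S(x)=\xi^{-\langle x,y_0\rangle}S(x)$ correctly disposes of the case $x\notin\Gamma^\bot$. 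The two routes rest on exactly the same orthogonality fact, but yours is self-contained and more elementary, while the paper's convolution formulation has the advantage that substituting a general argument $a$ immediately yields the shifted version (\ref{eq005}) used later; your computation extends to that case just as easily by keeping the factor $\xi^{-\langle x,a\rangle}$. Your closing remark is also accurate: the identity as written only involves ${\rm dim}(\Gamma)$, so the relation ${\rm dim}(\Gamma)+{\rm dim}(\Gamma^\bot)=n$ is not needed.
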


Proposition \ref{sbent2} follows from the equality
$\widehat{f}*\chi[\Gamma]=q^{{\rm
dim}(\Gamma)}\widehat{f\cdot\chi[\Gamma^\bot]}$. It is possible to
obtain a generalization of Proposition \ref{sbent2} by substitution
an argument $a\in F^n_q$ into both parts of the equation.
\begin{equation}\label{eq005}
\sum\limits_{y\in a+\Gamma}\widehat{f}(y)=q^{{\rm dim}(\Gamma)
-n/2}\sum\limits_{x\in \Gamma^\bot}{f}(x)\xi^{-\langle x,a\rangle}.
\end{equation}

It is well known that an extension $\mathbb{Q}(\xi)$ of the field of
rational numbers does not contain roots of unity with  exception of
$\pm \xi^k$ when $q$ is a prime number (see \cite{Lang}). It follows
that

\begin{pro}\label{sbent3} 
1) $\sum\limits_{j=0}^{q-1}\xi^{kj}=0$ if $k\neq 0\, {\rm mod}\,
q$;\\
2)  $\xi$ is not a root of  a rational polynomial  of degree
less than $q-1$;\\
3) if $|\sum_{j=1}^kc_j\xi^{a_j}|\in \mathbb{N}$ and $a_j,c_j\in
\mathbb{Z}$ then $\sum_{j=1}^kc_j\xi^{a_j}=N\xi^b$, where $b,N\in
\mathbb{Z}$.
\end{pro}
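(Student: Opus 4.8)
The plan is to handle the three parts separately; parts~1 and~2 are routine and part~3 carries the weight.

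For part~1, the sum $\sum_{j=0}^{q-1}\xi^{kj}$ is a geometric progression with ratio $\xi^{k}$; since $\xi$ is a primitive $q$-th root of unity and $k\not\equiv 0\pmod q$, the ratio is $\neq 1$, so the sum equals $(\xi^{kq}-1)/(\xi^{k}-1)=0$. For part~2, the point is that $\Phi_q(t)=1+t+\dots+t^{q-1}$ is the minimal polynomial of $\xi$ over $\mathbb{Q}$: it is monic of degree $q-1$, and it is irreducible because after the substitution $t\mapsto t+1$ it satisfies Eisenstein's criterion at the prime $q$ (here the primality of $q$ is essential). Hence no nonzero rational polynomial of degree $<q-1$ vanishes at $\xi$, and $1,\xi,\dots,\xi^{q-2}$ form a $\mathbb{Q}$-basis of $\mathbb{Q}(\xi)$.

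For part~3, put $\alpha=\sum_{j=1}^{k}c_j\xi^{a_j}$, an algebraic integer in the cyclotomic field $\mathbb{Q}(\xi)$, with $\alpha\overline\alpha=m^{2}$ where $m=|\alpha|\in\mathbb{N}$. First, since $\mathrm{Gal}(\mathbb{Q}(\xi)/\mathbb{Q})$ is abelian, complex conjugation is central, so for every automorphism $\sigma$ one gets $|\sigma(\alpha)|^{2}=\sigma(\alpha)\,\overline{\sigma(\alpha)}=\sigma(\alpha\overline\alpha)=m^{2}$; that is, every conjugate of $\alpha$ has absolute value $m$. If $m=0$ then $\alpha=0$; otherwise pass to ideals: $(\alpha)(\overline\alpha)=(m)^{2}$, the ideal $(m)$ is conjugation-invariant, and the prime $q$ is totally ramified, $(q)=(1-\xi)^{q-1}$ with $(1-\xi)$ the only prime above $q$. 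For the absolute values that occur here (powers of $q$, and $m=1$) this forces $(\alpha)=(m)$, hence $\alpha=mu$ for a unit $u\in\mathbb{Z}[\xi]^{\times}$. Finally, every conjugate of $u=\alpha/m$ has absolute value $1$, so by Kronecker's theorem $u$ is a root of unity, and by the fact quoted just before the proposition the roots of unity of $\mathbb{Q}(\xi)$ are exactly $\pm\xi^{b}$; thus $\alpha=\pm m\,\xi^{b}=N\xi^{b}$ with $N,b\in\mathbb{Z}$.

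The ideal step is the one I expect to be the obstacle: ``all conjugates have equal absolute value'' does not by itself determine $\alpha$, and for a general $m\in\mathbb{N}$ the ideal $(\alpha)$ need not equal $(m)$ --- a rational prime $p\mid m$ splitting in $\mathbb{Z}[\xi]$ allows an unbalanced prime factorization of $(\alpha)$ --- so one really must use that $|\alpha|$ is a power of the one totally ramified prime $q$. A more hands-on alternative, avoiding ideals, is to normalize $\alpha$ to $\sum_{i=0}^{q-2}b_i\xi^{i}$ with $b_i\in\mathbb{Z}$ (use part~1 to eliminate $\xi^{q-1}$ and uniqueness from part~2), expand $\alpha\overline\alpha=\sum_{d=0}^{q-1}\gamma_d\xi^{d}$ with $\gamma_d=\sum_i b_ib_{i-d}$ the cyclic autocorrelation, and match coefficients with the rational number $m^{2}$: by part~2 this forces $\gamma_1=\dots=\gamma_{q-1}$ and $\gamma_0-\gamma_1=m^{2}$, equivalently $\sum_i(b_i-b_{i-d})^{2}=2m^{2}$ for all $d\neq 0$. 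Deducing from these autocorrelation identities that $(b_i)$ is constant off a single coordinate --- which gives $\alpha=N\xi^{b}$ --- is the crux, and, as the ideal picture makes clear, it is precisely here that the restricted form of $m$ has to enter.
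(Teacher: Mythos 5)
Your parts 1) and 2) are correct and standard. Note that the paper itself offers no proof of this proposition: it only quotes from Lang the fact that the roots of unity in $\mathbb{Q}(\xi)$ are $\pm\xi^k$ and writes ``it follows that'', so there is nothing in the source to compare against beyond that single citation; your write-up supplies the argument the paper omits.

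Your treatment of part 3) is where the substance lies, and the worry you raise about the ideal step is not merely technical: part 3) as literally stated is \emph{false} for general $m\in\mathbb{N}$, by exactly the splitting mechanism you describe. For $q=3$ take $\alpha=(3+\xi)^2=8+5\xi$; then $\alpha\overline{\alpha}=64-40+25=49$, so $|\alpha|=7\in\mathbb{N}$, yet $\alpha$ is not of the form $N\xi^b$ (comparing real and imaginary parts rules out $b=0,1,2$). Here $7$ splits as $(3+\xi)(3+\overline{\xi})$ up to units, and $(\alpha)=\mathfrak{p}^2\neq(7)$. What is true --- and what your ideal-theoretic chain correctly proves --- is the statement restricted to the case the paper actually uses: when $|\alpha|$ is $1$ or a power of $q$, the total ramification $(q)=(1-\xi)^{q-1}$ with its unique, conjugation-stable prime above $q$ forces $(\alpha)=(m)$ from $(\alpha)(\overline{\alpha})=(m)^2$; the unit $\alpha/m$ then has all archimedean conjugates of absolute value $1$ (by centrality of complex conjugation in the abelian Galois group), hence is a root of unity by Kronecker, hence $\pm\xi^b$ by the quoted cyclotomic fact. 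That argument is complete and correct, and it covers the only invocation of 3) in the paper, namely $|W_f(y)|=q^{(n+s)/2}$ in Section 4. So your proposal proves the version of 3) that the paper needs, and your closing caveat pinpoints a genuine error in the statement as printed: the hypothesis should require $|\sum_j c_j\xi^{a_j}|$ to be a power of $q$ (or the conclusion must be weakened).
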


\begin{corol}\label{cbent2}If $\sum_{j=1}^kc_j|\xi^{a_j}-\xi^{b_j}|^2=0$, where $a_j,b_j,c_j\in
\mathbb{Z}$, $a_j\neq b_j$, $j=1,\dots,k$, then $\sum_{j=1}^kc_j=0$.
\end{corol}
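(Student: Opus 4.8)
The plan is to expand each squared modulus into a short sum of powers of $\xi$, rewrite the hypothesis as a vanishing $\mathbb{Q}$-linear combination of $1,\xi,\dots,\xi^{q-1}$, and then use that the only such relation is a scalar multiple of the cyclotomic relation supplied by Proposition~\ref{sbent3}.

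First I would use $\overline{\xi^{a}}=\xi^{-a}$ to compute $|\xi^{a_j}-\xi^{b_j}|^2=2-\xi^{a_j-b_j}-\xi^{b_j-a_j}$. Writing $S=\sum_{j=1}^{k}c_j$ and letting $d_j\equiv a_j-b_j\pmod q$ (here $d_j\not\equiv0$, since the hypothesis $a_j\neq b_j$ is to be read modulo $q$, i.e. $\xi^{a_j}\neq\xi^{b_j}$, which is the way it is applied in the sequel), the assumption becomes $2S=\sum_{j=1}^{k}c_j(\xi^{d_j}+\xi^{-d_j})$. After reducing all exponents modulo $q$, the right-hand side is a $\mathbb{Z}$-linear combination of $\xi,\dots,\xi^{q-1}$ with no constant term; collecting the coefficient $m_r\in\mathbb{Z}$ of $\xi^r$ for $r=1,\dots,q-1$ one has $m_r=m_{q-r}$ and, crucially, $\sum_{r=1}^{q-1}m_r=2S$, because each $c_j$ is counted once through $d_j$ and once through $-d_j$. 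Thus $P(\xi)=0$ for the polynomial $P(x)=2S-\sum_{r=1}^{q-1}m_r x^r\in\mathbb{Q}[x]$ with $\deg P\le q-1$.

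By Proposition~\ref{sbent3}(1) (taking $k=1$) we have $1+\xi+\dots+\xi^{q-1}=0$, and by Proposition~\ref{sbent3}(2) $\xi$ is a root of no nonzero rational polynomial of degree $<q-1$; hence $\Phi_q(x)=1+x+\dots+x^{q-1}$ is the minimal polynomial of $\xi$ over $\mathbb{Q}$. Dividing $P$ by $\Phi_q$ and using Proposition~\ref{sbent3}(2) on the remainder gives $P=\lambda\Phi_q$ with $\lambda\in\mathbb{Q}$. Comparing constant terms yields $\lambda=2S$, and comparing the coefficient of each $x^r$ yields $m_r=-\lambda=-2S$; summing over $r=1,\dots,q-1$ gives $2S=\sum_{r=1}^{q-1}m_r=-(q-1)\cdot 2S$, so $2qS=0$ and therefore $\sum_{j=1}^{k}c_j=S=0$. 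I do not anticipate a real obstacle here; the only points needing care are the reduction of exponents modulo $q$ so that the bound $\deg P\le q-1$ is legitimate, and the bookkeeping identity $\sum_r m_r=2S$. As a variant that avoids the minimal-polynomial step, one may instead apply the automorphisms $\xi\mapsto\xi^{k}$, $k=1,\dots,q-1$, to $2S=\sum_j c_j(\xi^{d_j}+\xi^{-d_j})$ and sum over $k$, using $\sum_{k=0}^{q-1}\xi^{kd_j}=0$ from Proposition~\ref{sbent3}(1) to reach $2S(q-1)=-2S$ directly.
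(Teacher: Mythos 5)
Your proof is correct and follows essentially the same route as the paper: expand $|\xi^{a_j}-\xi^{b_j}|^2=2-\xi^{d_j}-\xi^{-d_j}$, invoke Proposition~3 (minimality of $1+x+\dots+x^{q-1}$) to force all coefficients of the resulting vanishing rational combination of $1,\xi,\dots,\xi^{q-1}$ to be equal, and then compare with the sum of the coefficients to get $2qS=0$. Your bookkeeping identity $\sum_r m_r=2S$ is in fact the corrected form of the relation the paper asserts for its $\alpha_m$, and your observation that $a_j\neq b_j$ must be read modulo $q$ is the right (and necessary) reading of the hypothesis.
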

\begin{proof}
We have
$$|\xi^{a_j}-\xi^{b_j}|^2=(\xi^{a_j- b_j}-1)(\xi^{b_j-a_j}-1)=2-\xi^m-\xi^{q-m},$$
where $m=(a_j-b_j)\mod q$. Therefore,
$$0=\sum_{j=1}^kc_j|\xi^{a_j}-\xi^{b_j}|^2=2\sum_{j=1}^kc_j+
\sum_{m=1}^{q-1}\alpha_m\xi^m,$$ where $\sum_{m=1}^{q-1}\alpha_m=0$.
It follows from Proposition \ref{sbent3} that
$\alpha_m=2\sum_{j=1}^kc_j$ for each $m=1,\dots,q-1$. Then
$\sum_{j=1}^kc_j=0$.
\end{proof}

Complex numbers $\sum_{j=1}^kc_j\xi^{a_j}$ with $a_j,c_j\in
\mathbb{Z}$ are known as Eisenstein  integers if $\xi=e^{2\pi i/3}$
and as Gaussian integers if $\xi=e^{\pi i/2}$. The following
statement is obvious.

\begin{pro}
The absolute value of a nonzero Eisenstein or Gaussian integer is
not less than $1$.
\end{pro}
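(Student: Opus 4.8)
The plan is to reduce the statement to the classical norm computation in the rings $\mathbb{Z}[i]$ and $\mathbb{Z}[\omega]$, where $\omega=e^{2\pi i/3}$. First I would note that any expression $\alpha=\sum_{j=1}^k c_j\xi^{a_j}$ with $a_j,c_j\in\mathbb{Z}$ can be brought to a canonical shape: reduce each exponent $a_j$ modulo the order of $\xi$ (which is $3$ in the Eisenstein case and $4$ in the Gaussian case) and then use the relation $1+\xi+\xi^2=0$ (Eisenstein) or $\xi^2=-1$ (Gaussian) to eliminate $\xi^2$. In both cases one obtains $\alpha=a+b\xi$ for some $a,b\in\mathbb{Z}$.

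Next I would compute $|\alpha|^2=\alpha\overline{\alpha}$. Since $\overline{\xi}=\xi^{-1}$, we have $\overline{\xi}=-1-\xi$ and $\xi\overline{\xi}=1$ in the Eisenstein case, and $\overline{\xi}=-\xi$, $\xi\overline\xi=1$ in the Gaussian case. Expanding $(a+b\xi)(a+b\overline\xi)$ then gives $|\alpha|^2=a^2-ab+b^2$ for Eisenstein integers and $|\alpha|^2=a^2+b^2$ for Gaussian integers. In each case $|\alpha|^2$ is a nonnegative rational integer, and completing the square, $a^2-ab+b^2=(a-\tfrac b2)^2+\tfrac34 b^2$, shows it equals $0$ only when $a=b=0$, i.e. only when $\alpha=0$. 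Hence if $\alpha\neq 0$ then $|\alpha|^2$ is a positive integer, so $|\alpha|^2\ge 1$ and therefore $|\alpha|\ge 1$, which is the claim.

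There is essentially no obstacle here; the only point deserving a word of care is that $|\alpha|^2$ is a genuine rational integer rather than merely an element of $\mathbb{Q}(\xi)$. This is immediate from the explicit formulas above, but it can also be argued abstractly: $\alpha\overline\alpha$ is an algebraic integer fixed by the nontrivial automorphism of $\mathbb{Q}(\xi)/\mathbb{Q}$, hence lies in $\mathbb{Z}$. (Alternatively one could deduce that a real Eisenstein or Gaussian integer is a rational integer from part~3 of Proposition~\ref{sbent3}, but the direct norm computation is shorter and self-contained.)
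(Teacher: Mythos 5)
Your proof is correct. The paper offers no argument at all for this proposition --- it simply declares the statement obvious --- so there is nothing to compare against; your norm computation ($|\alpha|^2=a^2-ab+b^2$ in $\mathbb{Z}[\xi]$ with $\xi=e^{2\pi i/3}$, resp.\ $a^2+b^2$ in $\mathbb{Z}[i]$, a nonnegative rational integer vanishing only at $\alpha=0$) is exactly the standard justification one would supply, and both the reduction to the form $a+b\xi$ and the positivity argument are carried out correctly.
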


\section{Bent functions}

A function $f:F^n_q\rightarrow F_q$ is called a $q$-ary bent
function if and only if  $|\widehat{\xi^f}(y)|=1$ for each $y\in
F^n_q$ or equivalently
$\widehat{\xi^f}\cdot\overline{\widehat{\xi^f}}=I$, where $I$ is
equal to
 $1$ everywhere (see \cite{KSW}, \cite{Tok1}). Using
(2) we can obtain that the definition of a bent function is
equivalent to the equation
${\xi^f}*\overline{{\xi^f}}=|G|\chi[\{0\}].$ The matrix
$H=(h_{z,y})$, where $ h_{z,y}=\xi^{f(z-y)}$, is a generalized
Hadamard matrix.

 A bent function  $b$ is called regular if and only if there exists a function
$b':F^n_q\rightarrow F_q$ such that  $\xi^{b'}=\widehat{\xi^{b}}$.
It is easy to see that $b'$ is a bent function as well. In this
section we suppose everywhere that  $n$ is even.

\begin{pro}\label{cbent55} For any pair of $q$-ary regular bent functions $b$
and $b'$, it holds $|{\rm supp}(\xi^b-\xi^{b'})|=|{\rm
supp}(\widehat{\xi^b}-\widehat{\xi^{b'}})|$.
\end{pro}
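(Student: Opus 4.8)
The plan is to reduce the statement to the uncertainty-type machinery already set up, exploiting that both $b$ and $b'$ are \emph{regular} bent functions. Set $g=\xi^b-\xi^{b'}$. Since $b$ and $b'$ are regular, we have $\widehat{\xi^b}=\xi^{b'}$-type relations: more precisely, there exist functions $c,c':F^n_q\to F_q$ with $\widehat{\xi^b}=\xi^{c}$ and $\widehat{\xi^{b'}}=\xi^{c'}$, and by the Fourier inversion formula $\widehat{(\widehat{\xi^b})}(x)=\xi^{b(-x)}$, so the dual operation is (up to the reflection $x\mapsto -x$) an involution on regular bent functions. Hence $\widehat{g}=\xi^{c}-\xi^{c'}$, and applying the Fourier transform once more sends $\widehat{g}$ back to $g$ composed with $x\mapsto -x$. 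The key point is therefore that the Fourier transform maps the set $\{\xi^a-\xi^{a'}: a,a' \text{ regular bent}\}$ into itself, so $g$ and $\widehat g$ are ``of the same shape.''

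First I would make precise that $|{\rm supp}(g)|$ counts the points where $b\neq b'$ (since $\xi$ is a primitive $q$-th root of unity, $\xi^{b(x)}=\xi^{b'(x)}$ iff $b(x)=b'(x)$), and likewise $|{\rm supp}(\widehat g)|$ counts where $c\neq c'$. Next I would show $|{\rm supp}(g)|=|{\rm supp}(\widehat g)|$ by a counting identity rather than by constructing an explicit bijection. The natural tool is Parseval: $\|g\|^2=\|\widehat g\|^2$. Now $\|g\|^2=\sum_x|\xi^{b(x)}-\xi^{b'(x)}|^2$, a sum over $x$ with $b(x)\neq b'(x)$ of terms $|\xi^{b(x)}-\xi^{b'(x)}|^2$, and similarly $\|\widehat g\|^2=\sum_y|\xi^{c(y)}-\xi^{c'(y)}|^2$. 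These are not literally $|{\rm supp}(g)|$ and $|{\rm supp}(\widehat g)|$ because the summands $|\xi^m-1|^2=2-\xi^m-\xi^{-m}$ depend on the difference $m$. This is where Corollary~\ref{cbent2} enters: I would compute $\|g\|^2-\|\widehat g\|^2=0$, expand each $|\xi^{a_j}-\xi^{b_j}|^2$ as $2-\xi^{m_j}-\xi^{q-m_j}$, and invoke Corollary~\ref{cbent2} with coefficients $c_j=\pm1$ (coming from the two sums) to conclude that the number of terms on each side agrees, i.e. $\sum c_j=0$ forces $|{\rm supp}(g)|=|{\rm supp}(\widehat g)|$.

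The main obstacle I anticipate is making the bookkeeping in the last step rigorous: Corollary~\ref{cbent2} as stated handles an equation $\sum_j c_j|\xi^{a_j}-\xi^{b_j}|^2=0$ with $a_j\neq b_j$, and I must write $\|g\|^2-\|\widehat g\|^2$ in exactly that form, assigning $c_j=+1$ to each $x$ in ${\rm supp}(g)$ and $c_j=-1$ to each $y$ in ${\rm supp}(\widehat g)$, with the guarantee that all exponent pairs are genuinely distinct (which holds precisely on the supports). Then $\sum_j c_j = |{\rm supp}(g)| - |{\rm supp}(\widehat g)| = 0$, which is the claim. One subtlety worth checking is that regularity is genuinely used: it guarantees $\widehat{\xi^b}$ and $\widehat{\xi^{b'}}$ are again of the form $\xi^{(\cdot)}$, so that $\widehat g$ is once more a difference of two unit-modulus roots of unity at each point and the same $2-\xi^{m}-\xi^{-m}$ expansion applies on the Fourier side; without regularity the Fourier values need not be powers of $\xi$ and the argument breaks. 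The proof is then short: Parseval gives the numerical equality of the two quadratic sums, and Corollary~\ref{cbent2} converts that into equality of support sizes.
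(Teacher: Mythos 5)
Your proposal is correct and follows essentially the same route as the paper: linearity of the Fourier transform, the Parseval identity $\|\xi^b-\xi^{b'}\|^2=\|\widehat{\xi^b}-\widehat{\xi^{b'}}\|^2$, and then Corollary~\ref{cbent2} applied with coefficients $+1$ on the spatial side and $-1$ on the Fourier side to convert the numerical equality into equality of support sizes. Your explicit remark that regularity is what makes the Fourier-side terms again differences of powers of $\xi$ (so that Corollary~\ref{cbent2} applies) is exactly the point the paper leaves implicit.
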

\begin{proof}
The Fourier transform of  $\xi^b-\xi^{b'}$ is
$\widehat{\xi^b}-\widehat{\xi^{b'}}$. By the Parseval identity  we
obtain that
$\sum_x|\xi^{b(x)}-\xi^{b'(x)}|^2=\sum_y|\widehat{\xi^{b(y)}}-\widehat{\xi^{b'(y)}}|^2$.
It follows from Corollary \ref{cbent2} that in  both sides of the
equation the numbers of nonzero terms are equal.
\end{proof}


\begin{theorem}\label{cbent33}
The Hamming distance between two regular bent functions on  $F^n_q$
is not less than $q^{n/2}$. If it is equal to $q^{n/2}$, then the
difference between these functions is equal to $c\chi[\Gamma]$,
where $c\in F_q$ and $\Gamma$ is an $n/2$-dimensional affine
subspace.
\end{theorem}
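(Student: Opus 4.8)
The plan is to reduce the statement to the uncertainty principle (Proposition~\ref{sbent1}) together with its equality case (Corollary~\ref{cbent1}), using Proposition~\ref{cbent55} as the bridge. Let $b,b'$ be distinct regular bent functions on $F^n_q$ and set $f=\xi^b-\xi^{b'}$. The key observation is that $d(b,b')=|{\rm supp}(f)|$, since $\xi^{b(x)}\neq\xi^{b'(x)}$ exactly when $b(x)\neq b'(x)$. By regularity, $\widehat{f}=\widehat{\xi^b}-\widehat{\xi^{b'}}=\xi^{b^{*}}-\xi^{b'^{*}}$ for the dual bent functions $b^{*},b'^{*}$, and by Proposition~\ref{cbent55} we have $|{\rm supp}(f)|=|{\rm supp}(\widehat{f})|$. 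Feeding this into the uncertainty principle $|{\rm supp}(f)|\cdot|{\rm supp}(\widehat{f})|\geq q^n$ gives $|{\rm supp}(f)|^2\geq q^n$, i.e.\ $d(b,b')\geq q^{n/2}$ (recall $n$ is even, so $q^{n/2}$ is an integer).

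For the equality case, suppose $d(b,b')=q^{n/2}$. Then $|{\rm supp}(f)|\cdot|{\rm supp}(\widehat{f})|=q^n$, so Corollary~\ref{cbent1} applies: $f=c\,\phi_z\chi[\Gamma]$ for some affine subspace $\Gamma$ of dimension $n/2$, some $z\in F^n_q$, and some constant $c\in\mathbb{C}$. It remains to show that actually $f=c'\chi[\Gamma]$ with $c'\in F_q$, i.e.\ that the character factor $\phi_z$ is absorbed and that $c$ has the special form dictated by $\xi^b-\xi^{b'}$ taking values of the form $\xi^a-\xi^{a'}$. The point is that on $\Gamma$ the function $f(x)=\xi^{b(x)}-\xi^{b'(x)}$ is a difference of two $q$-th roots of unity, while $c\phi_z(x)=c\,\xi^{\langle z,x\rangle}$; comparing these for $x$ ranging over $\Gamma$ forces $b(x)-b'(x)$ to be constant on $\Gamma$ (so the difference function $b-b'$, restricted to $\Gamma$, is a single value $c'\in F_q$) and forces the linear functional $\langle z,\cdot\rangle$ to be constant on $\Gamma$, hence $z\in\Gamma^{\perp}$ when $\Gamma$ is linear; translating $\Gamma$ to contain $0$ reduces to this case. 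I would argue this by looking at the ratios $f(x)/f(x_0)$ for $x,x_0\in\Gamma$: on one hand this ratio equals $\xi^{\langle z,x-x_0\rangle}$, a root of unity; on the other it equals $(\xi^{b(x)}-\xi^{b'(x)})/(\xi^{b(x_0)}-\xi^{b'(x_0)})$, and using Proposition~\ref{sbent3}(3) (the only roots of unity in $\mathbb{Q}(\xi)$ are $\pm\xi^k$, and for odd prime $q$ only $\xi^k$) one shows this can only be a root of unity when the exponents in numerator and denominator match up, i.e.\ $\{b(x),b'(x)\}=\{b(x_0)+t,\,b'(x_0)+t\}$ for a common shift $t=\langle z,x-x_0\rangle$; combined with $b,b'$ distinct everywhere on ${\rm supp}(f)=\Gamma$ this pins down $b(x)-b'(x)\equiv c'$ on $\Gamma$.

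The main obstacle I anticipate is precisely this last bookkeeping step: cleanly extracting from ``$f=c\phi_z\chi[\Gamma]$'' the conclusion that the \emph{$F_q$-valued} difference $b-b'$ is constant on $\Gamma$, rather than merely that the \emph{complex} function $\xi^b-\xi^{b'}$ is a modulated indicator. One has to rule out the possibility that $b(x)-b'(x)$ wobbles while $\xi^{b(x)}-\xi^{b'(x)}$ still has constant modulus and the ``right'' phase $\xi^{\langle z,x\rangle}$; this is where the algebraic-number input of Proposition~\ref{sbent3} does the work, and it is also why the clean statement here is restricted to regular bent functions over prime $q$ (so that $\mathbb{Q}(\xi)$ has no unexpected roots of unity). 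A minor secondary point is handling affine rather than linear $\Gamma$ and confirming that after the translation normalisation the constant $c$ indeed lands in the set $\{\xi^a-\xi^{a'}: a\neq a'\}$, which is exactly the set of values $c'\chi[\Gamma]$ can take with $c'\in F_q\setminus\{0\}$.
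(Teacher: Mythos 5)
Your proposal matches the paper's proof of the lower bound exactly: combine Proposition~\ref{cbent55} with the uncertainty principle to get $|{\rm supp}(b-b')|^2\geq q^n$. For the equality case the paper stops at citing Corollary~\ref{cbent1} and leaves the rest implicit, so your extra bookkeeping---showing that $\xi^b-\xi^{b'}=c\,\phi_z\chi[\Gamma]$ forces $b-b'$ to be a single constant of $F_q$ on $\Gamma$, via the uniqueness (from Proposition~\ref{sbent3}) of writing a nonzero Eisenstein-type number as a difference of two $q$-th roots of unity---is a correct elaboration of a step the paper does not spell out.
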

\begin{proof}
Let $b,b':F^n_q\rightarrow F_q$ be bent functions. By Proposition
\ref{cbent55}, the equality  $|{\rm supp}(b-b')|= |{\rm
supp}(\widehat{\xi^b}-\widehat{\xi^{b'}})|$ holds.
 It follows from the uncertainty principle
(Proposition \ref{cbent1}) that
$$|{\rm
supp}(b-b')|^2 =   |{\rm supp}(\xi^b-\xi^{b'})||{\rm
supp}(\widehat{\xi^b}-\widehat{\xi^{b'}})|\geq q^n.
$$
\end{proof}

For binary case the statement of Theorem \ref{cbent33}  was proved
in \cite{Car1} and \cite{Kol1}. In \cite{Pot12'} it was found the
spectrum of possible small distances (less than the doubled minimum
distance) between two Boolean bent functions.

\section{Plateaued functions}

Define the Walsh--Hadamard transform of a function
$f:F^n_q\rightarrow F_q$  by the formula 
$W_f=q^{n/2}\widehat{\xi^f}$.

A function $f:F^n_q\rightarrow F_q$ is called a $q$-ary bent
function if and only if  $|W_f(y)|=q^{n/2}$ for each $y\in F^n_q$
and it
 is called a $q$-ary plateaued
function if and only if  $|W_f(y)|\in \{0,\mu\}$ for each $y\in
F^n_q$.

It follows from the Parseval identity  for a plateaued function $f$
that $q^{n}=\sum_x|\xi^{f(x)}|^2=\sum_y|W_f(y)|^2=\mu^2|{\rm
supp}(W_f)| $. Since $q$ is a prime number, we have that
$|W_f(y)|^2$ takes  the value $\mu^2=q^{s+n}$ exactly $q^{n-s}$
times for some $s$. Such $q$-ary plateaued functions $f$ are called
$s$-plateaued. By Proposition \ref{sbent3}(3),   we have $W_f(y)=\pm
q^{(n+s)/2}\xi^a$ for some $a\in F_q$ or $W_f(y)=0$ when
$\frac{n+s}{2}\in \mathbb{Z}$. Note that $0$-plateaued function is a
bent function. We say that an $s$-plateaued function $f$ is regular,
if $W_f(y)=q^{(n+s)/2}\xi^a$ for all $y\in F^n_q$ and
$\frac{n+s}{2}\in \mathbb{Z}$.

  The definition of plateaued functions is equivalent to the
equality ${\xi^f}*\overline{{\xi^f}}*{{\xi^h}}=\mu q^n{{\xi^f}}$,
where $h(x)=f(-x)$ (see \cite{Mes}).

\begin{pro}\label{splat12}
Let $f:F^n_q\rightarrow F_q$ be a $s$-plateaued function,
$A:F^n_q\rightarrow F^n_q$ be a non-degenerate affine transformation
and $\ell:F^n_q\rightarrow F_q$ be an affine function. Then
$g=f\circ A+\ell$ is a $s$-plateaued function.
\end{pro}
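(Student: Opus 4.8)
The plan is to compute the Walsh--Hadamard transform $W_g$ directly from the definition and to recognize it as $W_f$ composed with a bijective change of the spectral variable, multiplied by a unimodular scalar. Once this is established, the defining condition $|W_f(y)|\in\{0,\mu\}$ passes to $g$ with the same $\mu$, and the spectrum keeps the cardinality of its support, so $g$ is $s$-plateaued for the same $s$.

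First I would fix notation: write the non-degenerate affine transformation as $A(x)=Mx+b$ with $M$ an invertible matrix over $F_q$ and $b\in F^n_q$, and the affine function as $\ell(x)=\langle c,x\rangle+d$ with $c\in F^n_q$, $d\in F_q$. Recalling $W_f(y)=q^{n/2}\widehat{\xi^f}(y)=\sum_{x\in F^n_q}\xi^{f(x)-\langle x,y\rangle}$, I would start from $W_g(y)=\sum_{x\in F^n_q}\xi^{f(Mx+b)+\langle c,x\rangle+d-\langle x,y\rangle}$ and substitute $u=Mx+b$; since $x\mapsto Mx+b$ is a bijection of $F^n_q$, the resulting sum over $u$ is unchanged. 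Using the adjunction $\langle M^{-1}w,v\rangle=\langle w,(M^{\rm T})^{-1}v\rangle$, the exponent rearranges into $f(u)+\langle u,(M^{\rm T})^{-1}(c-y)\rangle+t(y)$, where $t(y)\in F_q$ collects all terms independent of $u$, namely $d$ together with the two contributions involving $b$.

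Second, factoring $\xi^{t(y)}$ out of the sum gives $W_g(y)=\xi^{t(y)}\,W_f\!\big((M^{\rm T})^{-1}(y-c)\big)$, hence $|W_g(y)|=\big|W_f\big((M^{\rm T})^{-1}(y-c)\big)\big|$ for every $y\in F^n_q$. Since $y\mapsto(M^{\rm T})^{-1}(y-c)$ is a bijection of $F^n_q$, the multiset $\{\,|W_g(y)|:y\in F^n_q\,\}$ equals $\{\,|W_f(y)|:y\in F^n_q\,\}$; in particular $|W_g(y)|\in\{0,\mu\}$ with the same $\mu$ and $|{\rm supp}(W_g)|=|{\rm supp}(W_f)|$. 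As $f$ is $s$-plateaued this means $\mu^2=q^{s+n}$ and $|{\rm supp}(W_g)|=q^{n-s}$, so $g$ is $s$-plateaued. Since $\xi^{t(y)}$ is an integer power of $\xi$, the same argument also shows $g$ is regular whenever $f$ is.

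There is no real obstacle here: the argument is just a change of variables in the Walsh sum. The only thing to watch is the bookkeeping --- carrying $(M^{\rm T})^{-1}$ correctly through the substitution and verifying that every summand of the exponent that does not contain the running variable $u$ lies in $\mathbb{Z}/q\mathbb{Z}$, so that it factors out as the unimodular constant $\xi^{t(y)}$. An alternative would be to invoke the convolution identity ${\xi^f}*\overline{{\xi^f}}*{\xi^h}=\mu q^n{\xi^f}$ recorded above and track how it transforms under $A$ and $\ell$, but the direct computation appears shortest.
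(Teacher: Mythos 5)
Your proof is correct and follows essentially the same route as the paper: a change of variables in the Walsh sum showing that $W_g$ is $W_f$ precomposed with an affine bijection of the spectral variable times a unimodular factor $\xi^{t(y)}$. The only cosmetic difference is that the paper handles the linear substitution and the affine shift as two separate cases and composes them, while you carry both through in a single computation.
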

\begin{proof}
Suppose that the affine function  $\ell$ is the zero function and
$A(x)=L(x)+u$, where $L$ is a linear transformation and $u\in
F^n_q$. Then it holds $W_g(y)=\sum_x\xi^{g(x)-\langle
x,y\rangle}=\sum_x\xi^{f(L(x)+u)-\langle
x,y\rangle}=\sum_z\xi^{f(z)-\langle L^{-1}(z-u),y\rangle}=
\xi^{\langle L^{-1}u, y\rangle }W_f((L^{-1})^{\rm T}y)$.

Suppose that an affine transform $A$ is identical and
$\ell(x)=\langle x,u\rangle+a$. Then it holds
$W_g(y)=\sum_x\xi^{g(x)-\langle
x,y\rangle}=\sum_x\xi^{f(x)+a-\langle x,y-u\rangle}=\xi^aW_f(y-u)$.
\end{proof}

 Let us calculate the minimal Hamming distance between two
 plateaued functions for binary and ternary cases.

\begin{theorem}\label{cbent3}
1) The Hamming distance between two distinct $s$-plateaued functions
on $F^n_3$
is not less than $3^{\frac{s+n-1}{2}}$.\\
2) The Hamming distance between two distinct $s$-plateaued functions
on  $F^n_2$
 is not less than $2^{\frac{s+n-2}{2}}$.
\end{theorem}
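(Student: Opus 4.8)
The plan is to estimate $d(f,g)=|\mathrm{supp}(\xi^f-\xi^g)|$ from below by playing the Parseval identity against the uncertainty principle (Proposition~\ref{sbent1}), the new ingredient being an arithmetic lower bound on the nonzero values of $W_f-W_g$. Write $h=\xi^f-\xi^g$; since the Fourier transform is linear and $W_f=q^{n/2}\widehat{\xi^f}$, we have $\widehat h=q^{-n/2}(W_f-W_g)$, and $h\neq0$ because $f\neq g$.

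First I would compute $\|h\|^2$ directly: $\|h\|^2=\sum_{x:\,f(x)\neq g(x)}|\xi^{f(x)}-\xi^{g(x)}|^2$, and a short computation shows that each nonzero summand equals $3$ for $q=3$ and $4$ for $q=2$, so $\|h\|^2=3\,d(f,g)$ in the ternary case and $\|h\|^2=4\,d(f,g)$ in the binary case. On the dual side, put $T=\mathrm{supp}(W_f-W_g)$; by Parseval $\|h\|^2=q^{-n}\sum_{y}|W_f(y)-W_g(y)|^2$. The crucial step is the claim that every nonzero value $|W_f(y)-W_g(y)|^2$ is at least $q^{n+s}$. Since $W_f(y),W_g(y)\in\mathbb{Z}[\xi]$ and $|W_f(y)|^2,|W_g(y)|^2\in\{0,q^{n+s}\}$, for $q=2$ this is immediate (the values lie in $\{0,\pm 2^{(n+s)/2}\}$ and $n+s$ is forced even), while for $q=3$ one notes that $3$ ramifies in $\mathbb{Z}[\xi]$ as $(1-\xi)^2$ up to a unit, so every element of norm $3^{n+s}$ is divisible by $(1-\xi)^{n+s}$; hence $(1-\xi)^{n+s}$ divides $W_f(y)-W_g(y)$ and, when this difference is nonzero, its squared modulus is a positive multiple of $3^{n+s}$. (When $\tfrac{n+s}2\in\mathbb Z$ one can argue more directly from Proposition~\ref{sbent3}(3), which gives $W_f(y)\in\{0,\pm q^{(n+s)/2}\xi^a\}$.) Consequently $\sum_y|W_f(y)-W_g(y)|^2\ge |T|\,q^{n+s}$, and therefore $\|h\|^2\ge |T|\,q^{s}$.

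Finally I would apply the uncertainty principle to $h$: $|\mathrm{supp}(h)|\cdot|\mathrm{supp}(\widehat h)|=d(f,g)\cdot|T|\ge q^n$, so $|T|\ge q^n/d(f,g)$. Feeding this into the bound $\|h\|^2\ge|T|\,q^s$ and the value of $\|h\|^2$ computed above yields $3\,d(f,g)\ge q^{n+s}/d(f,g)$ for $q=3$ and $4\,d(f,g)\ge q^{n+s}/d(f,g)$ for $q=2$, i.e.\ $d(f,g)^2\ge 3^{n+s-1}$ and $d(f,g)^2\ge 2^{n+s-2}$, which are exactly the asserted inequalities. I expect the only real obstacle to be the arithmetic lemma $|W_f(y)-W_g(y)|^2\ge q^{n+s}$: trivial for $q=2$, but for $q=3$ with $n+s$ odd it needs the ramification of $3$ in the Eisenstein integers (equivalently, working modulo the prime $1-\xi$); the rest is routine bookkeeping with Parseval and Proposition~\ref{sbent1}.
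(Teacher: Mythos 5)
Your proof is correct and follows essentially the same route as the paper: compute $\|\xi^f-\xi^g\|^2$ pointwise, bound each nonzero value of $W_f-W_g$ below by $q^{(n+s)/2}$, and combine Parseval with the uncertainty principle of Proposition~\ref{sbent1}. The only difference is that you justify the key bound for $q=3$ with $n+s$ odd via the ramification of $3$ in $\mathbb{Z}[\xi]$, which is in fact slightly more complete than the paper's argument, since the paper relies on the form $W_f(y)=\pm 3^{(n+s)/2}\xi^a$ established only for $\frac{n+s}{2}\in\mathbb{Z}$.
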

\begin{proof}
1. Let $f,g:F^n_3\rightarrow F_3$ be $s$-plateaued functions. Then
$|\xi^{f(x)}-\xi^{g(x)}|=\sqrt{3}$ if $f(x)\neq g(x)$; and
$|W_f(y)-W_g(y)|=3^{(n+s)/2}|\xi^a\pm \xi^b|$, $a,b\in F_3$ or
$|W_f(y)-W_g(y)|=3^{(n+s)/2}|\xi^a-0|$ if $W_f(y)\neq W_g(y)$. In
both cases it holds $|W_f(y)-W_g(y)| \geq 3^{(n+s)/2}$. Then, by the
Parseval identity,  we obtain that
$$3|{\rm supp}(f-g)|=\|\xi^{f}-\xi^{g}\|^2=\frac{1}{3^n}\|W_f-W_g\|^2\geq 3^{s}|{\rm
supp}(W_f-W_g)|.$$ It follows from the uncertainty principle
(Proposition \ref{sbent1}) that
$$3^{1-s}|{\rm
supp}(f-g)|^2 \geq   |{\rm supp}(f-g)||{\rm supp}(W_f-W_g)|\geq 3^n.
$$

2. Let $f,g:F^n_2\rightarrow F_2$ be $s$-plateaued functions. Then
$|\xi^{f(x)}-\xi^{g(x)}|=2$ if $f(x)\neq g(x)$; and\\
$|W_f(y)-W_g(y)|=2^{(n+s)/2}$ or $|W_f(y)-W_g(y)|=2^{(n+s+2)/2}$ if
$W_f(y)\neq W_g(y)$. Acting similar to  the case $q=3$ we obtain the
inequality $2^{2-s}|{\rm supp}(f-g)|^2 \geq  2^n.$
\end{proof}

If the  distance between $s$-plateaued functions is equal to
$2^{\frac{s+n-2}{2}}$ ($3^{\frac{s+n-1}{2}}$) then the difference
between these functions is equal to $c\chi[\Gamma]$, where $c\in F_2
(F_3)$ and $\Gamma$ is an $\frac{s+n-2}{2}$
($\frac{s+n-1}{2}$)-dimensional affine subspace (see Proposition
\ref{sbent1}). Moreover, in this case both of these $s$-plateaued
functions are affine functions on $\Gamma$.

From Proposition 2 one can conclude that the following statements
hold.

\begin{pro}\label{cbent41}
1) If an $s$-plateaued function $f:F^n_q\rightarrow F_q$ coincide
with an affine function on
 an affine subspace
 $\Gamma$, then ${\rm dim}\Gamma \leq \frac{s+n}{2}$.

2) If an $s$-plateaued function $f:F^n_q\rightarrow F_q$ coincide
with an affine function on
 an $\frac{s+n}{2}$-dimensional affine subspace, then there exist
  $q-1$ $s$-plateaued functions that differ from
  $f$ only on this subspace.
\end{pro}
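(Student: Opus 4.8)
The plan is to reduce both parts, via Proposition~\ref{splat12}, to the normalized situation in which the affine function is identically zero and the affine subspace is a linear subspace $U\ni 0$, and then to read everything off a single consequence of Proposition~\ref{sbent2}. For the reduction, given an $s$-plateaued $f$ that coincides on an affine subspace $\Gamma$ with an affine function $\ell$, write $\Gamma=A(U)$ where $A$ is a non-degenerate affine transformation and $U$ is a fixed linear subspace with $\dim U=\dim\Gamma$. Since $\ell\circ A$ is again affine, Proposition~\ref{splat12} shows that $g:=f\circ A-\ell\circ A$ is $s$-plateaued, and by construction $g\equiv 0$ on $U$. The map $h\mapsto h\circ A^{-1}+\ell$ is a bijection between $s$-plateaued functions differing from $g$ only on $U$ and $s$-plateaued functions differing from $f$ only on $\Gamma=A(U)$ (it carries $g$ to $f$), so it suffices to prove both parts for $g$ and $U$.

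The key step is to apply Proposition~\ref{sbent2} to $\xi^{g}$ with the subspace $U^\bot$, whose dual is $U$. Because $g\equiv 0$ on $U$ we have $\sum_{x\in U}\xi^{g(x)}=|U|=q^{\dim U}$, hence
\[
\sum_{y\in U^\bot}\widehat{\xi^{g}}(y)=q^{\dim(U^\bot)-n/2}\,q^{\dim U}=q^{n/2},
\qquad\text{equivalently}\qquad
\sum_{y\in U^\bot}W_{g}(y)=q^{n}.
\]
Write $k=\dim U$, so $|U^\bot|=q^{n-k}$; recall that for the $s$-plateaued $g$ each value $W_{g}(y)$ is either $0$ or of the form $\pm q^{(n+s)/2}\xi^{a}$ (Parseval and Proposition~\ref{sbent3}(3)), so $|W_{g}(y)|\le q^{(n+s)/2}$. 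The triangle inequality then gives
\[
q^{n}=\Bigl|\sum_{y\in U^\bot}W_{g}(y)\Bigr|\le\sum_{y\in U^\bot}|W_{g}(y)|\le q^{n-k}\cdot q^{(n+s)/2},
\]
whence $k\le\frac{n+s}{2}$, which is part~1.

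For part~2, assume now $k=\dim U=\frac{n+s}{2}$, so that $|U^\bot|=q^{(n-s)/2}$ and $q^{n-k}\cdot q^{(n+s)/2}=q^{n}$: both inequalities above become equalities. Since the total $q^{n}$ is a positive real number and every summand has modulus at most $q^{(n+s)/2}$, equality forces $W_{g}(y)=q^{(n+s)/2}$ for every $y\in U^\bot$. For $c\in F_q$ set $g_{c}:=g+c\,\chi[U]$ (so $g_{0}=g$). From the identity $\xi^{g_{c}}-\xi^{g}=(\xi^{c}-1)\chi[U]$ one gets $W_{g_{c}}(y)=W_{g}(y)+(\xi^{c}-1)\sum_{x\in U}\xi^{-\langle x,y\rangle}$, where the last sum equals $q^{k}$ for $y\in U^\bot$ and $0$ otherwise (Proposition~\ref{sbent3}(1)). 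Hence $W_{g_{c}}(y)=\xi^{c}W_{g}(y)=\xi^{c}q^{(n+s)/2}$ on $U^\bot$ and $W_{g_{c}}(y)=W_{g}(y)$ off $U^\bot$; thus $|W_{g_{c}}|=|W_{g}|$ pointwise and each $g_{c}$ is $s$-plateaued. The functions $g_{1},\dots,g_{q-1}$ are pairwise distinct and differ from $g$ only on $U$, and transporting them back through the bijection of the reduction yields the required $q-1$ $s$-plateaued functions differing from $f$ only on $\Gamma$.

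The main obstacle I anticipate is the extremal analysis for part~2: one must exploit not merely that the values $W_{g}(y)$ have modulus at most $q^{(n+s)/2}$ but that their sum is the \emph{positive real} number $q^{n}$, in order to pin the value — not just the modulus — to $q^{(n+s)/2}$ on all of $U^\bot$; it is exactly this sharp value that makes the twist by $\xi^{c}-1$ land back inside the plateaued spectrum. By comparison, the reduction via Proposition~\ref{splat12} and the convolution computation of $W_{g_{c}}$ are routine.
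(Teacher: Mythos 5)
Your proposal is correct and follows essentially the same route as the paper: normalize via Proposition~\ref{splat12} so that $f$ vanishes on a linear subspace, apply Proposition~\ref{sbent2} to get $\sum_{y\in U^\bot}W_f(y)=q^n$, bound $\dim U$ by the triangle inequality, and in the extremal case twist by $c\,\chi[U]$ and check that the Walsh spectrum is only rotated by $\xi^c$ on $U^\bot$. You are somewhat more explicit than the paper about the reduction and about why equality pins down $W_g|_{U^\bot}=q^{(n+s)/2}$, but the substance is identical.
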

\begin{proof}
By Proposition \ref{splat12}, without lost of generality, we suppose
that $\Gamma$ is a
 linear subspace and $f$ is equal to $0$
everywhere on $\Gamma$, i.\,e., $f|_\Gamma=0$. By Proposition
\ref{sbent2}, we obtain that $\sum_{y\in\Gamma^\perp}W_f(y)=q^n$. By
definition of $s$-plateaued functions, we get that
$|W_f(y)|=q^{(n+s)/2}$. Thus $|\Gamma^\perp|\geq q^{(n-s)/2}$. Then
${\rm dim}\Gamma \leq n-\frac{n-s}{2}=\frac{s+n}{2}$.

If ${\rm dim}\Gamma = \frac{s+n}{2}$ then  from the equation
$\sum_{y\in\Gamma^\perp}W_f(y)=q^n$ we obtain that
$W_f|_{\Gamma^\perp}=q^{(n+s)/2}$.

Consider function $g=a{\chi[\Gamma]}$, $a\in F_q$.
 It is easy to see that $\xi^f\xi^g=\xi^f+(\xi^a-1)\chi[\Gamma]$
 and
$(\xi^a-1)\widehat{\chi[\Gamma]}=q^{s/2}(\xi^a-1)\chi[\Gamma^\perp]$.
Next,
$$W_{f+g}=q^{n/2}\widehat{\xi^f\xi^g}=q^{n/2}\widehat{\xi^f}+q^{n/2}(\xi^a-1)\widehat{\chi[\Gamma]}=
W_f+q^{(n+s)/2}(\xi^a-1)\chi[\Gamma^\perp].$$ Consequently,
$|W_{f+g}(y)|=q^{(n+s)/2}$ for each $y\in F^n_q$.
\end{proof}
Note that if  a regular $s$-plateaued function $f$ coincide with
some affine function on $\Gamma$ then $f+g$, where $g$ is defined in
the above proof, is also regular.

Proposition  \ref{cbent41} was proved in \cite{Car1} for Boolean
bent functions.

\section{Quadratic forms}

A quadratic form $Q:F_q^m\rightarrow F_q$ is called non-degenerate
if and only if   $\{x\in F_q^m : \forall y\in F_q^m,
Q(y+x)=Q(y)\}=\{\overline{0}\}$. A linear subspace $U$ in $F_q^m$ is
called totally isotropic  for form $Q$ if and only if $Q(U)=0$. The
largest dimension of a totally isotropic subspace is often called
the Witt index of the form.
 If
$m=2n$, then the maximal Witt index of  non-degenerate forms of
degree $m$ is equal to $n$. All non-degenerate forms with the
maximal Witt index are equivalent with respect to non-degenerate
linear transformations of arguments. One of such quadratic forms
$Q_n$ is determined by the equation $Q_n(v_1,\dots,v_n,u_1,\dots
u_n)= v_1u_1+\dots+v_nu_n$. It is well known that $Q_n$ is a regular
bent function from Maiorana--McFarland class (see \cite{MF},
\cite{Mes0} and \cite{Tok1}). The following proposition is proved,
for example, in \cite{BCN} (p.274, Lemma 9.4.1).

\begin{pro}\label{sbent4}
The number of  totally isotropic subspaces of $Q_n$  is equal to
$\prod\limits_{i=1}^{n}(q^{n-i}+1)$.

\end{pro}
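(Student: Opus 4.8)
The statement to be proved is really about the totally isotropic subspaces of maximal dimension $n$ (the \emph{generators} of the hyperbolic polar space attached to $Q_n$): every totally isotropic subspace lies in a maximal one, and the cited Lemma~9.4.1 of \cite{BCN} counts the generators, so $\prod_{i=1}^{n}(q^{n-i}+1)$ is their number. Write $N_n$ for this number. The plan is to prove $N_n=(q^{n-1}+1)N_{n-1}$ by a double count of the set $\mathcal P$ of pairs $(v,W)$ with $v\in F_q^{2n}\setminus\{\overline{0}\}$ isotropic, $W$ an $n$-dimensional totally isotropic subspace, and $v\in W$, and then to solve the recursion.

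Counting $\mathcal P$ from the $W$ side: each of the $N_n$ maximal totally isotropic subspaces is an $n$-dimensional $F_q$-space all of whose vectors are isotropic, so it contributes $q^n-1$ pairs, giving $|\mathcal P|=N_n(q^n-1)$. For the $v$ side I first count nonzero isotropic vectors. Writing $Q_n(v,u)=\langle v,u\rangle$ with $v,u\in F_q^n$: for $v=\overline{0}$ every $u$ works ($q^n$ vectors), and for each of the $q^n-1$ nonzero $v$ the equation $\langle v,u\rangle=0$ is one nontrivial linear condition on $u$ ($q^{n-1}$ solutions), so $Q_n=0$ has $q^n+(q^n-1)q^{n-1}=q^{2n-1}+q^n-q^{n-1}$ solutions in $F_q^{2n}$, i.e.\ $(q^n-1)(q^{n-1}+1)$ nonzero isotropic vectors.

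Next, for a fixed nonzero isotropic $v$, I claim the generators containing $v$ correspond bijectively to the generators of the quadratic form $\overline{Q}$ that $Q_n$ induces on $\overline{V}:=v^\perp/\langle v\rangle$, where $v^\perp$ is taken with respect to the polar form of $Q_n$; the correspondence is $W\mapsto W/\langle v\rangle$, with inverse the full preimage in $v^\perp$. Here $\overline{Q}$ is well defined since $Q_n(v)=0$ and $v$ lies in the radical of the polar form restricted to $v^\perp$; it is nondegenerate of dimension $2n-2$, and it again has the maximal Witt index $n-1$ because the image of a maximal totally isotropic $W\ni v$ is totally isotropic of dimension $n-1$. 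Hence each nonzero isotropic $v$ lies in exactly $N_{n-1}$ generators, so $|\mathcal P|=(q^n-1)(q^{n-1}+1)N_{n-1}$ as well.

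Equating the two expressions for $|\mathcal P|$ gives $N_n=(q^{n-1}+1)N_{n-1}$, and with the base case $N_0=1$ (sanity check: $N_1=q^0+1=2$, the coordinate lines $\{v_1=0\}$ and $\{u_1=0\}$) an immediate induction yields $N_n=\prod_{i=1}^{n}(q^{n-i}+1)$. I expect the one delicate point to be the descent to $\overline{V}$ underpinning the second count: one must check carefully that $\overline{Q}$ is well defined and nondegenerate, that it is again of hyperbolic type with full Witt index, and that $W\mapsto W/\langle v\rangle$ and taking preimages in $v^\perp$ are mutually inverse on the relevant families of subspaces; in characteristic $2$ (the case $q=2$, where $Q_n$ is the Maiorana--McFarland Boolean bent function and the polar form is alternating) the usual care distinguishing quadratic from bilinear data is needed, but the recursion itself is unaffected. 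A more computational alternative avoiding the recursion is the orbit--stabilizer identity $N_n=|O_{2n}^{+}(q)|/|P|$ with $P$ the maximal parabolic stabilizing a single generator, which however requires importing the order formulas for $O_{2n}^{+}(q)$ and $P$.
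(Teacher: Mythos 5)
Your proof is correct, and it is worth noting that the paper itself offers no argument here: Proposition~\ref{sbent4} is stated with a bare citation to \cite{BCN} (Lemma 9.4.1), so any comparison is with that external reference rather than with an in-paper proof. Your reading of the statement as counting the \emph{maximal} ($n$-dimensional) totally isotropic subspaces is the right one --- it is what the formula $\prod_{i=1}^{n}(q^{n-i}+1)=\prod_{j=0}^{n-1}(q^{j}+1)$ actually counts, and it is what Corollary~\ref{cbent5} needs. The double count checks out on both sides: $N_n(q^n-1)$ from the $W$ side, and $(q^n-1)(q^{n-1}+1)$ nonzero isotropic vectors (your computation $q^n+(q^n-1)q^{n-1}-1=(q^n-1)(q^{n-1}+1)$ is right), each lying on $N_{n-1}$ generators by the residue construction. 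The descent to $v^\perp/\langle v\rangle$ that you flag as the delicate step is fine: since $Q_n(v)=0$ and $x\in v^\perp$, one has $Q_n(x+\lambda v)=Q_n(x)+\lambda^2Q_n(v)+\lambda B(x,v)=Q_n(x)$ in every characteristic, the induced form is nondegenerate of dimension $2n-2$ with Witt index $n-1$, and $W\mapsto W/\langle v\rangle$ is a bijection onto the generators of the residue because any generator through $v$ satisfies $W\subseteq W^\perp\subseteq v^\perp$. The base case $N_0=1$ (or $N_1=2$) closes the induction. What your approach buys is a self-contained, elementary proof in place of an imported lemma on distance-regular graphs; the orbit--stabilizer alternative you mention would be shorter on paper but, as you say, only by importing the order of $O_{2n}^{+}(q)$ and of the relevant maximal parabolic, which is a heavier dependency than the recursion.
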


It is easy to see that if $Q_n$ is an affine function  on  some
affine subspace, then it is an affine function on every coset.
Moreover, if $Q_n$ is an affine function  on a linear subspace of
 dimension $n$, then this subspace is isotropic
(here we assume that $q>2$). Thus $Q_n$ is an affine function on all
cosets of totally isotropic subspaces and  it is not affine  on
other linear subspaces of dimension $n$.

From Theorem 1, Propositions \ref{cbent41}(2) and \ref{sbent4} we
can conclude that the following holds.

\begin{corol}\label{cbent5}
If $q$ is a prime number and $q>2$, then there are exactly
$q^n(q^{n-1}+1)\cdots(q+1)(q-1)$ $q$-ary regular bent functions  at
distance $q^{n}$ from  $Q_n$.
\end{corol}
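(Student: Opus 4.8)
The plan is to count regular bent functions at distance exactly $q^n$ from $Q_n$ by combining the structural characterization from Theorem~\ref{cbent33} with the enumeration of totally isotropic subspaces from Proposition~\ref{sbent4}. First I would recall that by Theorem~\ref{cbent33}, any regular bent function $b$ with $d(b,Q_n)=q^n$ must satisfy $\xi^b-\xi^{Q_n}=c\,\xi^{Q_n}\chi[\Gamma]$ for some $n$-dimensional affine subspace $\Gamma$ and some constant; equivalently $b-Q_n=c'\chi[\Gamma]$ for some $c'\in F_q\setminus\{0\}$, so $b$ is obtained from $Q_n$ by adding a nonzero constant $c'$ on a single $n$-dimensional coset $\Gamma$. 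Thus the enumeration reduces to counting pairs $(\Gamma,c')$ with $\Gamma$ an $n$-dimensional affine subspace, $c'\in F_q^*$, such that the modified function $Q_n+c'\chi[\Gamma]$ is again a regular bent function.

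Next I would invoke Proposition~\ref{cbent41}(2): for $Q_n$ to yield a bent function after modification on the $n$-dimensional affine subspace $\Gamma$, it is necessary and sufficient that $Q_n$ restricted to $\Gamma$ coincides with an affine function (here $s=0$, so $\frac{s+n}{2}=n/2$ — wait, one must be careful: $Q_n$ lives on $F_q^{2n}$, so the relevant dimension bound is $\frac{s+2n}{2}=n$ when $s=0$). By the remark following Proposition~\ref{sbent4}, $Q_n$ is affine on an $n$-dimensional affine subspace precisely when that subspace is a coset of a totally isotropic subspace (using $q>2$). So the admissible linear subspaces underlying $\Gamma$ are exactly the totally isotropic subspaces of $Q_n$, of which there are $\prod_{i=1}^n(q^{n-i}+1)=(q^{n-1}+1)\cdots(q+1)\cdot 2$ by Proposition~\ref{sbent4} — but in the final formula the factor $(q-1)$ replaces the count over cosets and constants, so I need to track the coset and constant choices.

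The key counting step: fix a totally isotropic subspace $U$ of dimension $n$. It has $q^{2n}/q^n=q^n$ cosets. On each coset $\Gamma=U+a$, the function $Q_n|_\Gamma$ is affine, and by Proposition~\ref{cbent41}(2) there are $q-1$ distinct bent functions differing from $Q_n$ only on $\Gamma$ — namely $Q_n+c\chi[\Gamma]$ for $c\in F_q^*$. However, the construction in the proof of Proposition~\ref{cbent41}(2) together with the remark after it guarantees these modified functions are \emph{regular}. The only subtlety is to confirm that distinct choices of $(U,a,c)$ give distinct bent functions: since $b-Q_n=c\chi[U+a]$ determines the support $U+a$ (as the set where $b\neq Q_n$) and the value $c$ on it, the triple $(U,a,c)$ is recovered from $b$, so the map is injective — here one uses $c\neq 0$ so the support is genuinely $U+a$. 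Hence the total count is (number of totally isotropic $U$) $\times$ (number of cosets, $q^n$) $\times$ $(q-1)$; but the stated formula is $q^n(q^{n-1}+1)\cdots(q+1)(q-1)$, so in fact the product $\prod_{i=1}^n(q^{n-i}+1)$ should be read with its last factor absorbed: $\prod_{i=1}^{n}(q^{n-i}+1)$ has factors $q^{n-1}+1,\dots,q+1,2$, and multiplying by $q^n(q-1)$ one must reconcile the trailing $2$ with the absence of a $2$ in the target formula.

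The hard part will be exactly this reconciliation — resolving why the enumeration yields $q^n(q^{n-1}+1)\cdots(q+1)(q-1)$ rather than $q^n\cdot 2\cdot(q^{n-1}+1)\cdots(q+1)(q-1)$. The resolution I anticipate is that not all $n$-dimensional totally isotropic subspaces arise, or that cosets within a \emph{maximal} totally isotropic subspace should not all be counted independently, or — most likely — that Theorem~\ref{cbent33} forces $\Gamma$ to be a \emph{linear} subspace (it says "affine subspace", but combined with regularity and the constraint $d=q^n$ one checks $\Gamma$ may be taken through the origin up to translation symmetry absorbed elsewhere), which would replace the factor $q^n\cdot 2$ (cosets times the Witt-index-$n$ subspace count's trailing factor) by a single clean factor. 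I would carefully set up the bijection, using Proposition~\ref{splat12} to normalize, and then match the product formula term by term; verifying that no overcounting occurs across different isotropic subspaces sharing a common coset is the delicate point I would check last.
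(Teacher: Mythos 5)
Your outline is exactly the route the paper intends (the paper's ``proof'' is nothing more than the citation of Theorem~\ref{cbent33}, Proposition~\ref{cbent41}(2) and Proposition~\ref{sbent4}), and every step you actually commit to is sound: by Theorem~\ref{cbent33} on $F_q^{2n}$ the difference is $c\chi[\Gamma]$ with $\Gamma$ an $n$-dimensional affine subspace; the equality case of the uncertainty principle (Corollary~\ref{cbent1}, applied to $\xi^{b}-\xi^{Q_n}=(\xi^{c}-1)\xi^{Q_n}\chi[\Gamma]$ via Proposition~\ref{cbent55}) forces $Q_n$ to be affine on $\Gamma$, which for $q>2$ means $\Gamma$ is a coset of a maximal totally isotropic subspace; Proposition~\ref{cbent41}(2) with the remark after it gives exactly $q-1$ regular bent modifications per admissible $\Gamma$; and the map $(U,a,c)\mapsto Q_n+c\chi[U+a]$ is injective. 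This necessity step (that $Q_n$, not just $b$, must be affine on $\Gamma$) is the one place where your write-up is a little glib, but it is recoverable from Corollary~\ref{cbent1}.

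The factor of $2$ you flag at the end is a real discrepancy, but none of your three candidate resolutions works, and you should not try to make the $2$ disappear. All $\prod_{i=1}^{n}(q^{n-i}+1)=2(q+1)\cdots(q^{n-1}+1)$ maximal totally isotropic subspaces occur; every one of the $q^n$ cosets of each is admissible and yields $q-1$ pairwise distinct functions; and $\Gamma$ genuinely ranges over affine cosets, not just linear subspaces. A direct check at $q=3$, $n=1$ settles it: the twelve functions $x_1x_2+c\chi[\{x_1=a\}]$ and $x_1x_2+c\chi[\{x_2=a\}]$ with $c\in\{1,2\}$, $a\in F_3$, are pairwise distinct regular bent functions at distance $3$ from $x_1x_2$, and lines in the two non-isotropic directions do not work since $x_1x_2$ restricted to them is a nondegenerate quadratic in the line parameter; the formula as printed would give $6$. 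So the correct count is $q^n(q-1)\prod_{i=1}^{n}(q^{n-i}+1)$, i.e.\ the product $(q^{n-1}+1)\cdots(q+1)$ in the statement must be read as running all the way down to $(q^{0}+1)=2$, the full product of Proposition~\ref{sbent4}. With that reading your argument is complete and is the paper's argument; without it, no correct argument can reach the displayed number.
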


For binary case an analogous statement was proved in \cite{Kol1}. In
\cite{Kol2} it was established that this bound  on the number of
Boolean bent functions at the minimal distance is achieved only for
quadratic bent functions. It is natural to propose that this
property of quadratic bent functions holds for any prime $ q\geq2$.

\section{Constructions of plateaued functions}

 Denote by
$x^k=(x_1,x_2,\dots,x_k)$ the left part of a vector
$x=(x_1,x_2,\dots,x_k,\dots,x_n)$. The following construction of
plateaued functions is similar to Maiorana--McFarland's construction
for bent functions.

\begin{pro}\label{cplat133}
Let $x,y,u,v\in F^n_q$, $z,w\in F^k_q$, $k\leq n$. Let
$\tau:F^n_q\rightarrow F^n_q$ and  $\sigma:F^k_q\rightarrow F^k_q$
be bijections and $f:F^n_q\rightarrow F_q$ be an arbitrary function.
Then the function $F(x,y,z)=\langle\tau(x),y\rangle +
\langle\sigma(x^k),z\rangle +f(x)$ is a $k$-plateaued function of
$2n+k$ variables.
\end{pro}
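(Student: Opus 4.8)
The plan is to compute the Walsh--Hadamard transform $W_F$ directly from its definition and show that its modulus takes only the values $0$ and $q^{(2n+k)/2}\cdot q^{k/2}=q^{n+k}$, which is exactly the condition for $F$ to be $k$-plateaued (here the number of variables is $m=2n+k$, so $\mu^2=q^{s+m}$ with $s=k$ gives $\mu=q^{(k+2n+k)/2}=q^{n+k}$). Writing a generic point of $F_q^{2n+k}$ as $(x,y,z)$ with $x,y\in F_q^n$ and $z\in F_q^k$, and a generic frequency as $(a,b,c)$ of the same shape, I would start from
$$W_F(a,b,c)=\sum_{x,y,z}\xi^{\langle\tau(x),y\rangle+\langle\sigma(x^k),z\rangle+f(x)-\langle x,a\rangle-\langle y,b\rangle-\langle z,c\rangle}.$$

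Next I would perform the inner sums over $y$ and $z$ first, exactly as in the Maiorana--McFarland argument. For fixed $x$, the sum over $y\in F_q^n$ of $\xi^{\langle y,\tau(x)-b\rangle}$ is $q^n$ if $\tau(x)=b$ and $0$ otherwise, by Proposition~\ref{sbent3}(1); similarly the sum over $z\in F_q^k$ of $\xi^{\langle z,\sigma(x^k)-c\rangle}$ is $q^k$ if $\sigma(x^k)=c$ and $0$ otherwise. So only those $x$ survive that simultaneously satisfy $\tau(x)=b$ and $\sigma(x^k)=c$. Since $\tau$ is a bijection, the first condition has the unique solution $x_0=\tau^{-1}(b)$; this $x_0$ then contributes if and only if in addition $\sigma(x_0^k)=c$, i.e.\ $c=\sigma\bigl((\tau^{-1}(b))^k\bigr)$. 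Hence
$$W_F(a,b,c)=\begin{cases} q^{n+k}\,\xi^{\,f(x_0)-\langle x_0,a\rangle}, & c=\sigma\bigl((\tau^{-1}(b))^k\bigr),\ x_0=\tau^{-1}(b),\\[1mm] 0,& \text{otherwise.}\end{cases}$$
In particular $|W_F(a,b,c)|\in\{0,\,q^{n+k}\}$ for every frequency, so $F$ is plateaued; and the nonzero value $q^{n+k}=q^{(2n+k+k)/2}$ matches $\mu=q^{(m+s)/2}$ with $m=2n+k$ and $s=k$, so $F$ is precisely $k$-plateaued. (One can also note that the number of nonzero Walsh coefficients is $q^{2n}=q^{m-k}$, as it must be.)

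I do not expect a serious obstacle here: this is a routine double-character-sum evaluation, and the two bijectivity hypotheses are exactly what is needed to make the system $\tau(x)=b,\ \sigma(x^k)=c$ have either zero or one solution. The only mildly delicate point is bookkeeping with the ambient dimension $m=2n+k$ versus the ``extra'' $q^{k/2}$ factor, i.e.\ making sure that $W_F=q^{m/2}\widehat{\xi^F}$ with the paper's normalization gives modulus $q^{(m+k)/2}$ and that this is the correct $\mu$ for an $s$-plateaued function with $s=k$; I would state the value of $\mu$ explicitly and check it against the Parseval count $\mu^2 q^{m-k}=q^m$ to be safe.
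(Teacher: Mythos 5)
Your computation is correct and is essentially identical to the paper's own proof: both factor the triple sum, evaluate the character sums over $y$ and $z$ via Proposition~\ref{sbent3}(1), and use bijectivity of $\tau$ to isolate the unique surviving $x_0=\tau^{-1}(b)$, giving $|W_F|\in\{0,q^{n+k}\}$. Your explicit check that $q^{n+k}=q^{(m+s)/2}$ with $m=2n+k$, $s=k$ (and the count $q^{2n}=q^{m-s}$ of nonzero coefficients) is a worthwhile addition that the paper leaves implicit.
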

\begin{proof}
$W_F(u,v,w)= \sum\limits_{x,y,z\in
F^n_q}\xi^{\langle\tau(x),y\rangle + \langle\sigma(x^k),z\rangle
+f(x)-\langle x,u\rangle-\langle v,y\rangle -\langle w,z\rangle}=$\\
$\sum\limits_{x\in F^n_q}\xi^{f(x)-\langle
x,u\rangle}\sum\limits_{y\in
F^n_q}\xi^{\langle\tau(x),y\rangle-\langle
v,y\rangle}\sum\limits_{z\in F^k_q}\xi^{\langle\sigma(x^k),z\rangle
-\langle w,z\rangle}$.

By Proposition \ref{sbent3}(1) we obtain that the sum
$\sum\limits_{z\in F^k_q}\xi^{\langle\sigma(x^k),z\rangle -\langle
w,z\rangle}$  is equal to $q^k$ if $w=\sigma(x^k)$ and  equal to $0$
otherwise. In the same way,
 the sum
$\sum\limits_{y\in F^n_q}\xi^{\langle\tau(x),y\rangle -\langle
v,y\rangle}$  is equal to $q^n$ if $v=\tau(x)$ and equal to $0$
otherwise.

Therefore  $W_F(u,v,w)=q^{n+k}\xi^{f(x)-\langle x,u\rangle}$ if
$v=\tau(x)$ and $w=\sigma(x)$;  $W_F(u,v,w)=0$ otherwise.
\end{proof}

Similarly, it is possible to construct many $((t-2)n+k)$-plateaued
functions of $tn+k$ variables.

\begin{corol}\label{cplat01}
The number of different $q$-ary $((t-2)n+k)$-plateaued functions of
$tn+k$ variables is not less than $q^{q^n}(q^n!)^{t-1}q^k!$.
\end{corol}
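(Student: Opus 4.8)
The plan is to count the functions produced by the construction in Proposition~\ref{cplat133}, generalized to $tn+k$ variables as indicated in the remark preceding the corollary, and verify that distinct choices of the data give distinct functions (or at least that the overcounting is absorbed harmlessly). First I would fix the generalized construction: for a vector split as $(x,y^{(1)},\dots,y^{(t-2)},z)$ with $x\in F_q^n$, each $y^{(j)}\in F_q^n$, and $z\in F_q^k$, take
$$F = \sum_{j=1}^{t-2}\langle \tau_j(x),y^{(j)}\rangle + \langle \sigma(x^k),z\rangle + f(x),$$
where each $\tau_j:F_q^n\to F_q^n$ is a bijection, $\sigma:F_q^k\to F_q^k$ is a bijection, and $f:F_q^n\to F_q$ is arbitrary. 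The Walsh--Hadamard computation is the same as in Proposition~\ref{cplat133}, iterated once per block: each inner sum over $y^{(j)}$ collapses by Proposition~\ref{sbent3}(1) to $q^n$ when the dual variable equals $\tau_j(x)$ and $0$ otherwise, and the sum over $z$ collapses to $q^k$ exactly when the dual variable equals $\sigma(x^k)$. Hence $|W_F|\in\{0,q^{(t-1)n+k}\}$, so $F$ is $((t-2)n+k)$-plateaued by the count $q^{tn+k}=\mu^2|\mathrm{supp}(W_F)|$ with $\mu^2=q^{((t-2)n+k)+(tn+k)}$.

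Next I would count the data: there are $(q^n!)^{t-2}$ choices for the tuple $(\tau_1,\dots,\tau_{t-2})$, then $q^k!$ choices for $\sigma$, and $q^{q^n}$ choices for $f$ (an arbitrary map $F_q^n\to F_q$), giving $q^{q^n}(q^n!)^{t-2}q^k!$ so far. To reach the stated bound $q^{q^n}(q^n!)^{t-1}q^k!$ I need one more factor of $q^n!$; the natural source is composing the whole construction with a permutation of the $x$-coordinates, or equivalently enlarging the family by also letting the ``last'' linear block range over all bijections $F_q^n\to F_q^n$ independently — i.e. counting $t-1$ free bijections of $F_q^n$ rather than $t-2$. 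Concretely, one may write the $(t-1)$-st block as $\langle\tau_{t-1}(x),y^{(t-1)}\rangle$ with $y^{(t-1)}\in F_q^n$ absorbing $k$ of its slots into the $z$-block; I would present the construction with exactly $t-1$ bijections of $F_q^n$ plus one bijection of $F_q^k$ so the factor $(q^n!)^{t-1}$ appears directly.

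The main obstacle is injectivity of the parametrization: different data may yield the same function $F$, so a naive product count need only be a lower bound, which is exactly what the corollary claims ($\geq$). Thus I would not prove injectivity at all; instead I would argue that the construction is \emph{well-defined} (every admissible tuple of data produces a genuine $((t-2)n+k)$-plateaued function) and simply observe that the number of such functions is at least the number of admissible tuples \emph{divided by} the maximal multiplicity — and then note that even without dividing, the family is large: since $f$ ranges over all $q^{q^n}$ functions while the bijections are fixed, already distinct $f$'s with the same bijections give distinct $F$'s (their difference is $f(x)-f'(x)$, nonzero somewhere), so the multiplicity from the $f$-coordinate is $1$. For the bijection coordinates one can fix $f$ and recover each $\tau_j$ and $\sigma$ from $F$ by looking at which arguments make the respective inner character sums nonzero, so the parametrization is in fact injective; but for the corollary it suffices to note the lower bound holds term by term. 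I would therefore close with: the displayed family has cardinality exactly $q^{q^n}(q^n!)^{t-1}q^k!$ and consists of pairwise distinct $((t-2)n+k)$-plateaued functions, whence the claimed inequality.
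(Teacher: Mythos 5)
Your final construction is the right one and is exactly what the paper intends (the paper offers no explicit proof, leaving the corollary as the evident iteration of Proposition~\ref{cplat133} with $t-1$ inner-product blocks): taking $F(x,y^{(1)},\dots,y^{(t-1)},z)=\sum_{j=1}^{t-1}\langle\tau_j(x),y^{(j)}\rangle+\langle\sigma(x^k),z\rangle+f(x)$ gives $tn+k$ variables, $|W_F|\in\{0,q^{(t-1)n+k}\}$, hence a $((t-2)n+k)$-plateaued function, and the parametrization is injective (recover $f$ at $y=z=0$ and each $\tau_j(x)$, $\sigma(x^k)$ from the linear dependence on $y^{(j)}$, $z$), yielding the count $q^{q^n}(q^n!)^{t-1}q^k!$. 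Be aware, though, that your opening display with only $t-2$ blocks is mis-specified --- it has $(t-1)n+k$ variables and would be $((t-3)n+k)$-plateaued, and the intermediate suggestion of gaining the missing factor $q^n!$ by permuting the $x$-coordinates is a dead end (such a permutation is absorbed into the $\tau_j$ and $f$ and does not enlarge the family); only the corrected $t-1$-block version should appear in a final write-up.
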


Define the function $\theta:F_q^k\rightarrow \{0,1\}$  by the
equation $\theta=\chi[\{\overline{0}\}] $, i.\,e., $\theta(y)=1$ if
$y=\overline{0}$ and $\theta(y)=\overline{0}$ if $y\neq
\overline{0}$.

\begin{pro}\label{cplat13}

1) Let $g$ be a  function of $n+k$ variables and
$f(\overline{x},y)=g(\overline{x},a)$, $a\in F^k_q$. Then
$W_f(\overline{u},v)=q^kW_g(\overline{u},a)$ if $v=\overline{0}$ and
$W_f(\overline{u},v)=0$ if $v\neq\overline{0}$.

2) Suppose $g(\overline{x},y)=f(\overline{x})+\langle a, y\rangle$,
$a\in F^k_q$. Then
$W_g(\overline{u},v)=q^k\theta(a-v)W_f(\overline{u})$.

3) Suppose $g(\overline{x},y)=\sum_a\theta(a+y)f^a(\overline{x})$,
$a\in F^k_q$. Then $W_g(\overline{u},v)=\sum_a\xi^{\langle
a,v\rangle}W_{f^a}(\overline{u})$.
\end{pro}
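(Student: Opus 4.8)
The plan is to prove all three parts by direct computation of the Walsh--Hadamard transform from its definition $W_g(\overline u,v)=\sum_{\overline x,y}\xi^{g(\overline x,y)-\langle \overline x,\overline u\rangle-\langle y,v\rangle}$, splitting the sum over $\overline x\in F_q^n$ and $y\in F_q^k$ and using Proposition~\ref{sbent3}(1) to collapse the inner sum over $y$ whenever the exponent is linear in $y$. In each case the key observation is that the $\overline x$-dependence and the $y$-dependence of $g$ separate cleanly, so the double sum factors.

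For part~1), since $f(\overline x,y)=g(\overline x,a)$ does not depend on $y$, the inner sum $\sum_{y\in F_q^k}\xi^{-\langle y,v\rangle}$ equals $q^k$ when $v=\overline 0$ and $0$ otherwise; the remaining factor $\sum_{\overline x}\xi^{g(\overline x,a)-\langle \overline x,\overline u\rangle}$ is, by definition, $W_g(\overline u,a)$, giving the claimed formula. For part~2), the exponent is $f(\overline x)+\langle a,y\rangle-\langle \overline x,\overline u\rangle-\langle y,v\rangle$, so the $y$-sum is $\sum_y\xi^{\langle a-v,y\rangle}$, which by Proposition~\ref{sbent3}(1) is $q^k$ exactly when $a=v$ (i.e.\ $q^k\theta(a-v)$) and $0$ otherwise, leaving $W_f(\overline u)$ as the $\overline x$-factor. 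For part~3), note that for each fixed $y$ exactly one term $\theta(a+y)$ in the sum is nonzero, namely $a=-y$, so $g(\overline x,y)=f^{-y}(\overline x)$; hence $W_g(\overline u,v)=\sum_y\xi^{-\langle y,v\rangle}\sum_{\overline x}\xi^{f^{-y}(\overline x)-\langle\overline x,\overline u\rangle}=\sum_y\xi^{-\langle y,v\rangle}W_{f^{-y}}(\overline u)$, and reindexing $a=-y$ turns $\xi^{-\langle y,v\rangle}$ into $\xi^{\langle a,v\rangle}$, which is the stated identity.

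There is no real obstacle here: the whole proposition is a bookkeeping exercise in interchanging finite sums and applying the orthogonality relation $\sum_{j\in F_q}\xi^{kj}=q\,\theta(k)$. The only point requiring a little care is the sign and index convention in part~3) — making sure the substitution $a\mapsto -a$ (equivalently $y\mapsto -y$) is handled consistently so that the character $\xi^{-\langle y,v\rangle}$ becomes $\xi^{\langle a,v\rangle}$ rather than $\xi^{-\langle a,v\rangle}$ — and checking in part~1) that one really recovers $W_g$ evaluated at the full pair $(\overline u,a)$ rather than some partial transform. I would write the three computations in one or two displayed lines each, invoking Proposition~\ref{sbent3}(1) at the step where the linear-in-$y$ sum is evaluated, and conclude.
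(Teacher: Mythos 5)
Your proposal is correct and follows essentially the same route as the paper: a direct expansion of the Walsh--Hadamard sum, factoring off the $y$-sum and collapsing it via Proposition~\ref{sbent3}(1) in parts 1) and 2), and the substitution $a=-y$ in part 3). Even the notational reading of $W_g(\overline{u},a)$ in part 1) as the transform in the $\overline{x}$-variables of $g(\cdot,a)$ matches the paper's usage.
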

\begin{proof}
1. $W_f(\overline{u},v)=\sum\limits_{\overline{x}\in F^n_q, y\in
F^k_q}\xi^{g(\overline{x},a)}\xi^{-\langle
v,y\rangle-\langle\overline{u},\overline{x}\rangle}=
\sum\limits_{\overline{x}\in
F^n_q}\xi^{g(\overline{x},a)}\xi^{-\langle\overline{u},\overline{x}\rangle}\sum\limits_{
y\in F^k_q}\xi^{-\langle v,y\rangle}$. By Proposition
\ref{sbent3}(1), we obtain that $\sum\limits_{ y\in
F^k_q}\xi^{-\langle v,y\rangle}=q^k$ if $v=\overline{0}$ and it is
equal to $0$ otherwise.

2. $W_g(\overline{u},v)=\sum\limits_{\overline{x}\in F^n_q, y\in
F^k_q}\xi^{g(\overline{x},y)}\xi^{-\langle
v,y\rangle-\langle\overline{u},\overline{x}\rangle}=
\sum\limits_{\overline{x}\in
F^n_q}\xi^{f(\overline{x})-\langle\overline{u},\overline{x}\rangle}\sum\limits_{y\in
F^k_q}\xi^{\langle a,y\rangle-\langle v,y\rangle}$. By Proposition
\ref{sbent3}(1), we obtain that $\sum\limits_{y\in
F^k_q}\xi^{\langle a, y\rangle-\langle v,y\rangle}=q^k\theta(a-v)$.

3. $W_g(\overline{u},v)= \sum\limits_{a\in
F^k_q}\sum\limits_{\overline{x}\in
F^n_q}\xi^{f^a(\overline{x})+\langle
v,a\rangle-\langle\overline{u},\overline{x}\rangle}=\sum\limits_{a\in
F^k_q}\xi^{\langle v,a\rangle}W_{f^a}(\overline{u}).$
\end{proof}

\begin{corol}\label{cplat1}
1) If $f$ is an $s$-plateaued function of $n+k$ variables and $a\in
F^k_q$ then $g(\overline{x},y)=f(\overline{x},a)$ is an
$(s+k)$-plateaued function of $n+k$ variables.

2) If $f$ is an $s$-plateaued function of $n$ variables and $a\in
F^k_q$ then $f(\overline{x})+\langle a,y\rangle$ is an
$(s+k)$-plateaued function of $n+k$ variables.

3) If $f^a$ are $s$-plateaued functions of $n$ variables with
pairwise disjoint supports of $W_{f^a}$, $a\in F^k_q$, then
$\sum_a\theta(a+y)f^a(\overline{x})$ is an $(s-k)$-plateaued
function of $n+k$ variables.
\end{corol}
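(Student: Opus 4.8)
The plan is to derive each of the three parts of Corollary \ref{cplat1} directly from the corresponding part of Proposition \ref{cplat13}, reading off the Walsh--Hadamard spectrum of the new function from the displayed formula and checking the $s$-plateaued condition. Recall that an $s$-plateaued function of $N$ variables is characterized by $|W(\cdot)|\in\{0,q^{(N+s)/2}\}$, and that by the Parseval identity the exact value of the exponent is forced once we know that only two magnitudes occur; so in each case it suffices to exhibit the nonzero magnitude and confirm it has the right form.

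For part 1, I would start from $W_g(\overline u,v)=q^k W_f(\overline u,a)$ when $v=\overline 0$ and $W_g(\overline u,v)=0$ otherwise. Since $f$ is $s$-plateaued on $n+k$ variables, $|W_f(\overline u,a)|\in\{0,q^{(n+k+s)/2}\}$, hence $|W_g(\overline u,v)|\in\{0,q^{k+(n+k+s)/2}\}=\{0,q^{((n+k)+(s+k))/2}\}$, which is exactly the $(s+k)$-plateaued condition for a function of $n+k$ variables. For part 2, from $W_g(\overline u,v)=q^k\theta(a-v)W_f(\overline u)$ the factor $\theta(a-v)$ is $0$ unless $v=a$, in which case the magnitude is $q^k|W_f(\overline u)|\in\{0,q^{k+(n+s)/2}\}=\{0,q^{((n+k)+(s+k))/2}\}$; again this is $(s+k)$-plateaued on $n+k$ variables.

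Part 3 is the one that requires a genuine (if small) argument rather than bookkeeping, and I expect it to be the main obstacle. From $W_g(\overline u,v)=\sum_{a\in F^k_q}\xi^{\langle a,v\rangle}W_{f^a}(\overline u)$, the hypothesis that the supports of the $W_{f^a}$ are pairwise disjoint means that for each fixed $\overline u$ at most one term in the sum over $a$ is nonzero. Hence $|W_g(\overline u,v)|=|W_{f^a}(\overline u)|$ for the unique $a$ (if any) with $W_{f^a}(\overline u)\neq 0$, independently of $v$; in particular $|W_g(\overline u,v)|\in\{0,q^{(n+s)/2}\}$. It remains to check that $q^{(n+s)/2}$ is the correct magnitude for an $(s-k)$-plateaued function of $n+k$ variables, i.e.\ that $(n+s)/2=((n+k)+(s-k))/2$, which is immediate. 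I would also note in passing that the disjoint-support hypothesis is exactly what makes the spectrum take only the values $0$ and $q^{(n+s)/2}$ and prevents cancellation or reinforcement among the $\xi^{\langle a,v\rangle}$ coefficients, so the construction in part 3 is genuinely more delicate than parts 1 and 2 — this is worth emphasising since such disjointness conditions are the standard device (already used for Boolean plateaued functions) by which lower-level plateaued functions are glued to form a function that is plateaued at a smaller index.
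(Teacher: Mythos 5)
Your handling of parts 2) and 3) is correct and coincides with the paper's implicit argument: the corollary carries no separate proof and is meant to be read off from the spectra computed in Proposition \ref{cplat13}, exactly as you do, and your emphasis on the disjoint-support hypothesis in part 3) as the point that keeps $\sum_a\xi^{\langle a,v\rangle}W_{f^a}(\overline{u})$ two-valued in modulus is the right one.

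Part 1), however, does not survive the arithmetic. You claim $q^{k+(n+k+s)/2}=q^{((n+k)+(s+k))/2}$, but $k+\frac{n+k+s}{2}=\frac{n+3k+s}{2}$ whereas $\frac{(n+k)+(s+k)}{2}=\frac{n+2k+s}{2}$; the exponents differ by $k/2$. This slip conceals a real difficulty rather than a typo. In Proposition \ref{cplat13}(1) the symbol $W_g(\overline{u},a)$ denotes, as its proof shows, the sum $\sum_{\overline{x}}\xi^{g(\overline{x},a)-\langle\overline{u},\overline{x}\rangle}$, i.e.\ the Walsh--Hadamard transform of the $n$-variable restriction $g(\cdot,a)$, and not the $(n+k)$-variable Walsh coefficient of $g$ at the point $(\overline{u},a)$. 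The hypothesis that $f$ is $s$-plateaued in $n+k$ variables controls the latter quantity, not the former, and the restriction of an $s$-plateaued function to a coset $\{y=a\}$ need not be $s$-plateaued, nor plateaued at all (for $q=2$, $n=k=1$ the bent function $f(x,y)=xy$ restricts to a constant or to a linear function of one variable, and the resulting $g$ is $2$-plateaued, not $1$-plateaued as part 1 would assert). The statement, and your computation, become correct if one reads the hypothesis as saying that the restriction $f(\cdot,a)$ is $s$-plateaued as a function of $n$ variables; the exponent count is then literally that of your part 2). As written, your step ``since $f$ is $s$-plateaued on $n+k$ variables, $|W_f(\overline{u},a)|\in\{0,q^{(n+k+s)/2}\}$'' feeds the wrong magnitude into the formula, and only the erroneous exponent identity makes the conclusion appear to follow.
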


For $q=2$ these constructions of plateaued functions and the example
below can be found in \cite{Tar2} and \cite{Tar1}.

Functions $R(x_1,x_2,x_3,x_4)=(x_1+x_2)(x_3+x_4)+x_1+x_3$ and
$R'(x_1,x_2,x_3,x_4)=(x_1+x_4)(x_3+x_2)+x_1+x_3$ are $2$-plateaued
functions over $F_2$ at distance $4=2^{\frac{s+n-2}{2}}$. Functions
$T(x)=x^2$ and $T'(x)=x+2x^2$ are $0$-plateaued functions over $F_3$
at distance $1=3^{\frac{s+n-1}{2}}$.

Using functions $R,R',T,T'$  and Corollary \ref{cplat1}, we prove
the following statement.

\begin{pro}
1) For any integer $s\geq 2$ and $t\geq 0$, there exist pairs of
$2$-ary
 $s$-plateaued functions of $n=2+s+2t$ variables at  distance
$2^{\frac{s+n-2}{2}}$.

2) For any integer $s\geq 0$ and $t\geq 0$, there exist pairs of
$3$-ary $s$-plateaued functions of $n=1+s+2t$ variables at distance
$3^{\frac{s+n-1}{2}}$.
\end{pro}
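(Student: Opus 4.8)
The plan is to start from the two explicit small examples and ``inflate'' them using the three operations in Corollary~\ref{cplat1}, keeping track of the plateaued order and the distance at each step.

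\medskip\noindent\textbf{Part 2 (ternary case).}
First I would take the base pair $T(x)=x^2$ and $T'(x)=x+2x^2$, which are $0$-plateaued on $F_3$ (i.e.\ ternary bent functions of one variable) at distance $1$. To raise $s$ by one while keeping the functions defined on one more variable, apply Corollary~\ref{cplat1}(2): if $h$ is $s$-plateaued of $m$ variables then $h(\overline{x})+\langle a,y\rangle$ with $a\in F_q$, $a\neq 0$, is $(s+1)$-plateaued of $m+1$ variables, and adding the same linear term $\langle a,y\rangle$ to both members of a pair changes neither the two functions' difference nor the distance between them. Iterating this $s$ times starting from $(T,T')$ produces a pair of $s$-plateaued functions of $1+s$ variables at distance $1 = 3^{(s+(1+s)-1)/2 - s}\cdot 3^{s}$; one checks $1\cdot 3^{s}$ is exactly $3^{(s+n-1)/2}$ when $n=1+s$, so the bound of Theorem~\ref{cbent3} is met. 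Finally, to pass from $n=1+s$ to $n=1+s+2t$, I would either apply Corollary~\ref{cplat1}(2) again with a two-dimensional (or $2t$-dimensional) linear term, which raises $s$ by $2t$ and is the wrong direction, so instead I would use the Maiorana--McFarland-type inflation of Proposition~\ref{cplat133}: direct-summing a bent function $Q_t$ of $2t$ variables onto the pair adds $2t$ to the number of variables and $0$ to $s$, while multiplying the distance by $q^{t}=3^{t}$; since $3^{(s+(n+2t)-1)/2} = 3^{t}\cdot 3^{(s+n-1)/2}$, the extremal pair stays extremal. Concretely one sets $G(\overline{x},y,z)=\langle z,z'\rangle + h(\overline{x},y)$ on the old pair $h,h'$, where $z\in F_q^{t}$; the difference $G-G' = h-h'$ composed with the projection has support multiplied by $3^{t}$.

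\medskip\noindent\textbf{Part 1 (binary case).}
The argument is the same with $q=2$ and the base pair $R,R'$, which are $2$-plateaued on $F_2^4$ at distance $4 = 2^{(2+4-2)/2}$. Raising $s$ by one at the cost of one extra variable uses Corollary~\ref{cplat1}(2) with $a=1\in F_2$; raising $n$ by $2$ at fixed $s$ while multiplying the distance by $2$ uses the direct sum with $Q_1(x,y)=xy$ as in Proposition~\ref{cplat133}. Starting from $(R,R')$, after $s-2$ applications of the first operation and $t$ applications of the second we obtain a pair of $s$-plateaued functions of $4+(s-2)+2t = 2+s+2t$ variables at distance $4\cdot 2^{s-2}\cdot 2^{t} = 2^{(s+n-2)/2}$, as required.

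\medskip\noindent\textbf{Main obstacle.}
The only genuine work is bookkeeping: one must verify that (i) the stated base functions really have the claimed plateaued order and distance (a direct Walsh-transform computation), and (ii) each inflation operation multiplies the Hamming distance by exactly the right power of $q$ and shifts $(s,n)$ so that the target exponent $2^{(s+n-2)/2}$, resp.\ $3^{(s+n-1)/2}$, is preserved. The delicate point is ensuring the distance does not merely stay $\geq$ the bound but equals it after inflation --- for Corollary~\ref{cplat1}(2) this is because adding a fixed linear term to both functions is an isometry on supports, and for the Maiorana--McFarland direct sum it is because $W_{G}-W_{G'}$ is supported on exactly $q^{t}$ translates of $\mathrm{supp}(W_h-W_{h'})$, each carrying the same modulus; combined with the uncertainty-principle lower bound of Theorem~\ref{cbent3} this pins the distance down exactly. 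I would also note in passing that, by the remark after Theorem~\ref{cbent3}, the extremal pairs so constructed are affine on a subspace of the appropriate dimension, which is a useful sanity check.
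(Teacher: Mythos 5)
There is a genuine gap in the step that handles the parameter $t$. Your first operation is fine: adding a linear term $\langle a,y\rangle$ in one new variable to both members of a pair multiplies the Hamming distance by exactly $q$ (the difference is cylindrified over the new coordinate — note that your phrase ``changes neither the difference nor the distance'' contradicts the correct factor $3^{s}$ you then write down), and the target bound $q^{(s+n-c)/2}$ also grows by a factor of $q$ when $s$ and $n$ both increase by $1$, so extremality is preserved. But your second operation fails: direct-summing a fixed bent function $b$ of $2t$ variables onto the pair gives $(h+b)-(h'+b)=h-h'$ cylindrified over \emph{all} $q^{2t}$ values of the $2t$ new coordinates, so the distance is multiplied by $q^{2t}$, not by $q^{t}$ as you claim. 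Meanwhile the bound $q^{(s+n-c)/2}$ only grows by $q^{t}$ when $n$ increases by $2t$ at fixed $s$. Hence the pairs you produce for $t\geq 1$ sit at distance $q^{s+2t}$ (binary: $2^{s+2t}$ from the base distance $4\cdot 2^{s-2}$) rather than the required $q^{s+t}$, and the proposition is not proved. The uncertainty-principle argument you invoke at the end only gives a lower bound and cannot repair an overcount of the actual support.

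The paper closes exactly this hole by using the concatenation construction of Corollary \ref{cplat1}(3) instead of a direct sum. One first lifts $R,R'$ (resp.\ $T,T'$) by a linear term in $k_{1}$ variables to get $g_{a}=R+\langle a,y\rangle$, $g'_{a}=R'+\langle a,y\rangle$, which are $(2+k_{1})$-plateaued with pairwise disjoint Walsh supports as $a$ varies. Then one glues $2^{k_{2}}$ of the functions $g_{a}$ along $k_{2}$ new variables to form $h$, and forms $h'$ by the same gluing with the single block $g_{0}$ replaced by $g'_{0}$. This adds $k_{2}$ variables and \emph{lowers} the plateaued order by $k_{2}$, while the two glued functions differ only on the one modified block, so $d(h,h')=d(g_{0},g'_{0})=4\cdot 2^{k_{1}}$ is unchanged. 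Choosing $k_{1}=s-2+t$, $k_{2}=t$ gives order $s$, $n=2+s+2t$ variables, and distance $2^{s+t}=2^{(s+n-2)/2}$ exactly. If you want to keep your two-step outline, you must replace the direct-sum step by this ``replace one block of a concatenation'' step (or some other device that adds variables without inflating the support of the difference).
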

\begin{proof}
1. For $a\in F^{k_1}_2$ consider $(2+k_1)$-plateaued functions
$g_a({x},y)=R(x)+\langle a,y\rangle$ and $g'_a({x},y)=R'(x)+\langle
a,y\rangle$ of $4+k_1$ variables. By the definition,
$d(g_a,g'_a)=4\cdot2^{k_1}=2^\frac{4+k_1+2+k_1-2}{2}$. It is easy to
see that $W_{g_a}(u,v)=0$ if $v\neq a$. Then functions $W_{g_a}$
have pairwise disjoint supports and functions $W_{g_a}$ and
$W_{g'_c}$ have disjoint support if $a\neq c$.
 Consider the set
of functions $G_{k_2}=\{g_a : a=(b_1,\dots,b_{k_2},0,\dots,0)\}$.
Using the set $G_{k_2}$ and Proposition \ref{cplat1}(2),   it is
possible to construct a $(2+k_1-k_2)$-plateaued function $h$ of
$(4+k_1+k_2)$ variables. Consider the set
$G'_{k_2}=G_{k_2}\cup\{g'_0\}\setminus \{g_0\}$. Acting the same
way, we can construct a $(2+k_1-k_2)$-plateaued function $h'$  using
$G'_{k_2}$. It is easy to see that
$d(h,h')=4\cdot2^{k_1}=2^\frac{4+k_1+k_2+2+k_1-k_2-2}{2}$. Thus for
all integers $s\geq 2$ and $t\geq 0$ we have found pairs of
$s$-plateaued functions  of $n=2+s+2t$ variables at distance
$2^{\frac{s+n-2}{2}}$, where $k_1=s-2+t$ and $k_2=t$.

2. The proof of ternary case is similar to the proof of binary case.
\end{proof}

\section{Nonlinearity of ternary functions}

Denote by $A_{n,q}$ the set of affine functions  $f:F^n_q\rightarrow
F^n_q$. Functions $f,g:F^n_q\rightarrow F^n_q$ are said to be
isotopic if and only if there exists a set of permutations
$\tau_i:F_q\rightarrow F_q$, $i=0,\dots,n$, such that
$g(x_1,\dots,x_n)=\tau_0g(\tau_1x_1,\dots,\tau_nx_n)$ for all
$(x_1,\dots,x_n)\in F_q^n$. Denote by $\widetilde{A}_{n,q}$ the set
of functions that are isotopic to affine functions.

The distance between a function $f$ and a set of functions $A$ is
the minimum distance between $f$ and any function $g\in A$. The
nonlinearity $nl(f)$ of a function $f:F^n_q\rightarrow F^n_q$ is the
distance between $f$ and $A_{n,q}$ and the strong nonlinearity
$\widetilde{nl}(f)$ is the distance between $f$ and
$\widetilde{A}_{n,q}$.

For $q=2$ and $3$ the sets  $A_{n,q}$ and $\widetilde{A}_{n,q}$
coincide. Thus the strong nonlinearity is the same as the
nonlinearity in these cases.

The nonlinearity of a function $f$ mapping  from $F_2^n$ to $F_2$ is
expressed via its Walsh--Hadamard coefficients by the following
formula (see \cite{Tok1} or \cite{Tar1})
\begin{equation}\label{eplat2}
nl(f)=2^{n-1}-2^{-1}\max\limits_{u\in F_2^n}|W_f(u)|.
\end{equation}

The similar formula exists for ternary functions $f:F^n_3\rightarrow
F^n_3$.
\begin{pro}\label{proplat11}  $nl(f)=2(3^{n-1}-3^{-1}\max\limits_{a\in F_3, y\in F_3^n}Re(\xi^aW_f(y))).$
\end{pro}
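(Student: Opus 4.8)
The plan is to reproduce, for $q=3$, the standard derivation of the binary formula (\ref{eplat2}); the only new feature is that over $F_3$ a single Walsh--Hadamard coefficient must be paired with its complex conjugate, which is where the real part enters.

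First I would fix an affine function $\ell(x)=\langle x,u\rangle+a$ with $u\in F^n_3$ and $a\in F_3$ and count the agreements $f=\ell$. By Proposition \ref{sbent3}(1), for any $t\in F_3$ one has $\frac13\sum_{j=0}^{2}\xi^{jt}=1$ if $t=0$ and $0$ otherwise; applying this with $t=f(x)-\ell(x)$ and summing over $x\in F^n_3$ gives
$$|\{x:f(x)=\ell(x)\}|=\frac13\sum_{j=0}^{2}\xi^{-ja}\sum_{x\in F^n_3}\xi^{jf(x)-\langle x,ju\rangle}.$$
The $j=0$ term is $3^n$; the $j=1$ term is $\xi^{-a}W_f(u)$; and since $2\equiv-1\pmod 3$, the $j=2$ term equals $\xi^{-2a}\sum_x\xi^{-f(x)+\langle x,u\rangle}=\xi^{a}\overline{W_f(u)}=\overline{\xi^{-a}W_f(u)}$. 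Hence $|\{x:f(x)=\ell(x)\}|=\frac13\bigl(3^n+2\,Re(\xi^{-a}W_f(u))\bigr)$, and therefore
$$d(f,\ell)=3^n-|\{x:f(x)=\ell(x)\}|=2\cdot3^{n-1}-\tfrac23\,Re(\xi^{-a}W_f(u)).$$

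Finally, $nl(f)=\min_{u,a}d(f,\ell)$ is obtained by maximizing $Re(\xi^{-a}W_f(u))$ over all $u\in F^n_3$ and $a\in F_3$. Since $a\mapsto\xi^{-a}$ permutes $\{1,\xi,\xi^2\}$, this maximum equals $\max_{a\in F_3,\,y\in F^n_3}Re(\xi^aW_f(y))$, and substituting back yields exactly the asserted identity.

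The computation is entirely routine; the only places that need attention are the exponent arithmetic modulo $3$ (so that the $j=2$ summand is correctly recognised as the conjugate of the $j=1$ summand, and the two combine into $2\,Re(\cdot)$) and the observation that, unlike the binary case where $W_f$ is real, here minimizing over the constant term $a$ selects the best of the three rotations of $W_f(y)$, and $Re(\xi^aW_f(y))$ need not equal $|W_f(y)|$ for any $a\in F_3$ — a point that will matter for the nonlinearity bounds in the next section.
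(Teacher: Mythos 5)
Your proof is correct and follows essentially the same route as the paper: both reduce $d(f,\ell)$ for an affine $\ell$ to $2\cdot3^{n-1}-\tfrac{2}{3}Re(\xi^{\pm a}W_f(u))$ and then minimize over $u$ and $a$. The only presentational difference is that the paper obtains the agreement count by taking real parts of $W_f(y)$ termwise (using $Re(\xi^{t})=1$ or $-\tfrac12$), whereas you use the character-sum indicator $\tfrac13\sum_{j}\xi^{jt}$; the two computations are algebraically identical.
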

\begin{proof}
By definitions, it holds $W_f(y)=\sum\limits_x\xi^{f(x)-\langle
x,y\rangle}$. If $f(x)\neq \langle x,y\rangle$ then
$\xi^{f(x)-\langle x,y\rangle}=\xi$ or $\xi^{f(x)-\langle
x,y\rangle}=\xi^2$. In  both cases $Re(\xi^{f(x)-\langle
x,y\rangle})=-\frac12$. If $f(x)= \langle x,y\rangle$ then
$\xi^{f(x)-\langle x,y\rangle}=1$. Consequently, $Re(W_f(y))=|\{x\in
F^n_3:f(x)= \langle x,y\rangle\}|-\frac12|\{x\in F^n_3:f(x)\neq
\langle x,y\rangle\}|$. By the obvious equality $|\{x\in F^n_3:f(x)=
\langle x,y\rangle\}|+|\{x\in F^n_3:f(x)\neq \langle
x,y\rangle\}|=3^n$, we obtain that $d(f,\langle
x,y\rangle)=2(3^{n-1}-3^{-1}Re(W_f(y)))$.

Consider an affine function $\langle x,y\rangle-a$, where $a=1,2$.
We have  that $d(f,\langle x,y\rangle-a)=d(f+a,\langle x,y\rangle)$.
Then $d(f+a,\langle x,y\rangle)=2(3^{n-1}-3^{-1}Re(\xi^aW_f(y))$ as
is proven above.
\end{proof}

A Boolean function $f:F^n_2\rightarrow F_2$ reachs maximum
nonlinearity $nl(f)=2^{n-1}-2^{\frac{n}{2}-1}$ if and only if $n$ is
even and $f$ is a bent function. For $q$-ary bent functions the
similar statement is not true. Consider the case $q=3$ in details.

\begin{pro}\label{proplat13}
1) For any $f:F^n_3\rightarrow F_3$ it holds $nl(f)\leq
2\cdot3^{n-1}-3^{\frac{n}{2}-1};$\\
2) $nl(f)= 2\cdot3^{n-1}-3^{\frac{n}{2}-1}$ if and only if $f$ is a
bent function, $n$ is even and $W_f(y)=-3^{n/2}\xi^a$, $a\in F_3$,
for
each $y\in F^n_3$;\\
 3) if $b$ is a bent function and there exists $y\in F_3^n$ such that
 $W_b(y)=3^{n/2}\xi^a$, $a\in F_3$, then $nl(b)=2(3^{n-1}-3^{\frac{n}{2}-1}).$
\end{pro}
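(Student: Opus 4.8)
The plan is to deduce all three statements from the formula $nl(f)=2\bigl(3^{n-1}-3^{-1}M(f)\bigr)$ of Proposition~\ref{proplat11}, where $M(f):=\max\limits_{a\in F_3,\,y\in F_3^n}Re\bigl(\xi^aW_f(y)\bigr)$. In these terms part~1) is the bound $M(f)\ge\tfrac12\cdot3^{n/2}$; part~2) says that $M(f)=\tfrac12\cdot3^{n/2}$ holds exactly when $n$ is even, $f$ is bent, and for every $y$ one has $W_f(y)=-3^{n/2}\xi^{a(y)}$ with $a(y)\in F_3$; and for part~3) it suffices to prove $M(b)=3^{n/2}$ under the stated hypothesis, since then the formula yields $nl(b)=2\bigl(3^{n-1}-3^{n/2-1}\bigr)$.

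The single elementary ingredient I would establish first is: for every $w\in\mathbb{C}$, $\max\limits_{a\in F_3}Re(\xi^aw)\ge\tfrac12|w|$, with equality if and only if $w=-r\xi^c$ for some real $r\ge0$ and some $c\in F_3$. This is immediate from the picture: the three numbers $w,\xi w,\xi^2w$ have equal modulus $|w|$ and arguments spaced by $2\pi/3$, so exactly one of them has argument in $(-\pi/3,\pi/3]$, and its real part is $|w|\cos(\arg)\ge|w|\cos(\pi/3)=\tfrac12|w|$, with equality precisely when that argument is $\pm\pi/3$, i.e.\ when $w/|w|\in\{-1,-\xi,-\xi^2\}$. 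Now the Parseval identity gives $\sum_y|W_f(y)|^2=3^{2n}$, hence $\max_y|W_f(y)|\ge3^{n/2}$; applying the remark at a maximising $y$ gives $M(f)\ge\tfrac12\cdot3^{n/2}$, which is part~1).

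For part~2), assume equality in part~1). Since $nl(f)$ is an integer, so is $3^{n/2-1}$, which forces $n$ to be even. For each $y$ the remark gives $\tfrac12|W_f(y)|\le\max_aRe(\xi^aW_f(y))\le M(f)=\tfrac12\cdot3^{n/2}$, so $|W_f(y)|\le3^{n/2}$; together with $\sum_y|W_f(y)|^2=3^{2n}$ this forces $|W_f(y)|=3^{n/2}$ for all $y$, i.e.\ $f$ is bent. By Proposition~\ref{sbent3}(3) each $W_f(y)$ equals $N_y\xi^{c_y}$ with $N_y\in\mathbb{Z}$, whence $N_y=\pm3^{n/2}$; if $N_y=3^{n/2}$ for some $y$, then $\max_aRe(\xi^aW_f(y))=3^{n/2}>\tfrac12\cdot3^{n/2}$, contradicting $M(f)=\tfrac12\cdot3^{n/2}$, so $W_f(y)=-3^{n/2}\xi^{c_y}$ for every $y$. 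Conversely, if $f$ is bent, $n$ even, and $W_f(y)=-3^{n/2}\xi^{a(y)}$ for each $y$, then $Re(\xi^aW_f(y))=-3^{n/2}Re(\xi^{a+a(y)})\in\{-3^{n/2},\tfrac12\cdot3^{n/2}\}$, so $M(f)=\tfrac12\cdot3^{n/2}$; this proves part~2).

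Finally, for part~3), bentness of $b$ gives $|W_b(y)|=3^{n/2}$ for all $y$, hence $Re(\xi^aW_b(y))\le|W_b(y)|=3^{n/2}$ and $M(b)\le3^{n/2}$; on the other hand the hypothesis supplies $y_0$ with $W_b(y_0)=3^{n/2}\xi^{a_0}$, so $Re(\xi^{-a_0}W_b(y_0))=3^{n/2}$ and therefore $M(b)=3^{n/2}$, which is what was needed. The only delicate point, in my view, is the equality discussion in part~2): one must \emph{deduce}, not assume, that $n$ is even and that every Walsh coefficient carries the minus sign --- $q=3$ bent functions genuinely exist for odd $n$ --- and Proposition~\ref{sbent3}(3) is exactly what rules out a positive sign.
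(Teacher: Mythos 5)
Your proof is correct and follows essentially the same route as the paper: Proposition~\ref{proplat11} plus the elementary fact that $\max_{a\in F_3}Re(\xi^a w)\geq |w|/2$ with equality iff $w=-|w|\xi^c$, Parseval for the lower bound on $\max_y|W_f(y)|$, and integrality of $nl(f)$ to force $n$ even. You merely spell out steps the paper compresses (that equality forces $|W_f(y)|=3^{n/2}$ for \emph{all} $y$, and the sign determination via Proposition~\ref{sbent3}(3), which the paper instead gets directly from the equality case of the $\max Re$ inequality).
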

\begin{proof}
 By the Parseval identity, there exists $y\in F_3^n$ such that
$|W_f(y)|=3^{n/2}$. It is easy to see that $\max\limits_{a\in
F_3}Re(\xi^aN)\geq |N|/2$ for all $N\in \mathbb{C}$ and
$\max\limits_{a\in F_3}Re(\xi^aN)=|N|/2$ if and only if
$N=-|N|\xi^a$, $a\in F_3$. A nonlinearity is an integer number.
Consequently, if $nl(f)= 2\cdot3^{n-1}-3^{\frac{n}{2}-1}$ then $n$
is even. Consequently, items 1) and 2) is true.

If $b$ is a bent function then $\max\limits_{a\in
F_3}Re(\xi^aW_b(y))\leq |W_b(y)|=3^{n/2}$ for each $y\in F^n_3$. If
$W_b(y)=3^{n/2}\xi^a$ for some $y\in F^n_3$ then $\max\limits_{a\in
F_3}Re(\xi^aW_f(y))=3^{n/2}$.
\end{proof}

The function $b(x)=x_1^2+\dots +x_n^2$ satisfies the conditions of
Proposition \ref{proplat13} (2) for $n\equiv2(\mod 4)$. Indeed it
holds $W_b(y)=\sum\limits_{x\in F^n_3}\xi^{b(x)-\langle
y,x\rangle}=\prod\limits_{j=1}^{n}\sum\limits_{x_j\in
F_3}\xi^{x_j(x_j-y_j)}$. By direct calculations, we obtain that
$\sum\limits_{a\in F_3}\xi^{a(a-u)}=(1-\xi)\xi^{1-u^2}$ and
$(1-\xi)^2=-3\xi$. Thus for all $y\in F^n_3$ we have
$W_b(y)=(-3)^{n/2}\xi^{a(y)}$ for some $a(y)\in F_3$.

\section{Extremal property of plateaued functions}

Consider a function $f:F_q^n\rightarrow F_q$. If for all $a\in F_q$
numbers $|f^{-1}(a)\cap \Gamma|$ are equal for all $k$-dimensional
faces $\Gamma$ then $f$ is called a correlation immune function of
order $n-k$. Denote by $cor(f)$ the maximum of these orders $n-k$.
The correlation immunity of a balanced ($W_f(\bar 0)=0$) function
$f$ on $F_q^n$ is expressed via its Walsh--Hadamard coefficients by
the following formula (see \cite{Pot} and \cite{Tar0})
\begin{equation}\label{eplat2}
cor(f)=\min\limits_{u\in F_q^n, W_f(u)\neq 0}wt(u)-1,
\end{equation}
where $wt(u)$ is the Hamming weight of  $u$.

The following theorem is proved by Tarannikov in \cite{Tar0}.

\begin{theorem}\label{tplat3} Let $f$ be a balanced Boolean function on
$F_2^n$, $cor(f)\leq n-2$. Then $nl(f)\leq 2^{n-1}-2^{cor(f)+1}$. If
$nl(f)= 2^{n-1}-2^{cor(f)+1}$  then $f$ is a plateaued function.
\end{theorem}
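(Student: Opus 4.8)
The plan is to reduce the theorem to a counting identity among Walsh--Hadamard coefficients via the correlation-immunity formula~(\ref{eplat2}). First I would observe that by~(\ref{eplat2}) the hypothesis $cor(f)=k$ means $W_f(u)=0$ for every $u$ with $1\le wt(u)\le k$, and since $f$ is balanced, also $W_f(\bar 0)=0$. So the support of $W_f$ consists only of vectors of weight $\ge k+1$. Next, I would apply Proposition~\ref{sbent2} (or rather its particular instances) to a suitably chosen subspace. Taking $\Gamma$ to be a coordinate subspace of dimension $n-k-1$ so that $\Gamma^\perp$ is the coordinate subspace on the remaining $k+1$ coordinates, every nonzero $y\in\Gamma$ has weight $\le n-k-1$. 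Combining with the vanishing of low-weight coefficients, the sum $\sum_{y\in\Gamma}W_f(y)$ collapses to the single term $W_f(\bar 0)=0$ if $k+1\le n-1$; more useful is to take $\Gamma$ with $\Gamma^\perp$ a low-dimensional coordinate subspace so that $\sum_{x\in\Gamma^\perp}\xi^{f(x)}$ is governed by restrictions of $f$ to small faces, while $\sum_{y\in a+\Gamma}W_f(y)$, by~(\ref{eq005}), picks up only the (few) nonzero coefficients whose support meets the complementary coordinates. The key step is to extract from this an upper bound on $\max_u|W_f(u)|$ by a dimension/pigeonhole argument: the nonzero Walsh coefficients are confined to vectors of weight $\ge k+1$, the Parseval identity $\sum_u|W_f(u)|^2=2^{2n}$ forces at least $2^{2n}/(\max_u|W_f(u)|)^2$ of them to be nonzero, yet a divided-difference / restriction argument (differencing $f$ over the $2^{k+1}$ translates of a coordinate $(k+1)$-face and using that $\hat{\xi^f}$ restricted to the dual has all low-weight entries zero) shows those nonzero coefficients cannot be too few; balancing the two estimates yields $\max_u|W_f(u)|\ge 2^{\frac{n+k+2}{2}}=2^{k+1}\cdot 2^{\frac{n-k}{2}}\ge 2^{k+1}$ when $n-k\ge\cdots$, hence $nl(f)=2^{n-1}-\tfrac12\max_u|W_f(u)|\le 2^{n-1}-2^{cor(f)+1}$ via~(\ref{eplat2}) for $q=2$.

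For the equality case, I would argue that if $nl(f)=2^{n-1}-2^{cor(f)+1}$ then $\max_u|W_f(u)|=2^{k+2}$, and moreover the Parseval/counting inequality above must be tight: every nonzero $W_f(u)$ has the same absolute value $2^{k+2}$. Indeed tightness in the balancing step forces all nonzero Walsh coefficients to attain the maximum modulus, since any strictly smaller nonzero coefficient would, together with the count forced by the restriction argument, overshoot $2^{2n}$ in Parseval. That is exactly the statement that $f$ is a plateaued function.

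The main obstacle I anticipate is the restriction/divided-difference estimate: getting a \emph{lower} bound on the number of nonzero Walsh coefficients that is strong enough to match Parseval is delicate, because a priori $\mathrm{supp}(W_f)$ could be spread unevenly among the weight classes $\ge k+1$. The clean way around this is to localize: restrict $f$ to a $(k+2)$-dimensional coordinate face $\Gamma'$, use Proposition~\ref{sbent2} to relate $\sum_{y\in(\Gamma')^\perp+a}W_f(y)$ to the Fourier transform of $f|_{\Gamma'}$, note that on $\Gamma'$ the only surviving frequencies have full weight $k+2$ inside $\Gamma'$ (all lower ones vanish by correlation immunity), so $f|_{\Gamma'}$ — after the balanced shift — looks like a function on $2^{k+2}$ points whose transform is supported on a single weight class; an $\ell^2$ computation there gives the sharp local bound, and summing over a partition of $F_2^n$ into such faces propagates it globally. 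The only other subtlety is the parity/integrality constraint (making sure $\frac{n-k}{2}$ enters correctly and that $nl(f)$ is an integer forces the exponent), which is handled exactly as in the proof of Proposition~\ref{proplat13}.
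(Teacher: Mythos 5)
The paper states Theorem~\ref{tplat3} without proof, citing Tarannikov \cite{Tar0}; the intended argument is nevertheless visible in the paper's own ternary analogue (Lemma~\ref{lemmaplat2} and the theorem of Section~8). Its engine is a \emph{divisibility} property: for a balanced $f$ with $cor(f)=k\le n-2$, every Walsh--Hadamard coefficient $W_f(u)$ is divisible by $2^{k+2}$ (proved by expressing $W_f(u)$ for $wt(u)=k+1$, and then inductively for higher weights, through sums of $W_f$ over low-dimensional coordinate flats via Proposition~\ref{sbent2}, each such sum being $2^{k+1}$ times an integer). Since Parseval forces some $W_f(u)\neq 0$, divisibility gives $\max_u|W_f(u)|\ge 2^{k+2}$, hence $nl(f)=2^{n-1}-\tfrac12\max_u|W_f(u)|\le 2^{n-1}-2^{k+1}$; and in the equality case every nonzero coefficient is simultaneously $\ge 2^{k+2}$ (divisibility) and $\le 2^{k+2}$ (maximality), so $W_f$ takes only the values $0,\pm 2^{k+2}$, i.e.\ $f$ is plateaued. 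Your proposal never isolates this divisibility statement, and the substitute you offer does not close.

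Concretely, your ``balancing'' step is logically inverted. Parseval gives $|{\rm supp}(W_f)|\ge 2^{2n}/\max_u|W_f(u)|^2$, a \emph{lower} bound on the number of nonzero coefficients in terms of the maximum; to convert this into a lower bound on the maximum you would need an \emph{upper} bound on $|{\rm supp}(W_f)|$, whereas your restriction argument is advertised as showing the nonzero coefficients ``cannot be too few'' --- another lower bound, which combined with Parseval yields nothing. The intermediate inequality $\max_u|W_f(u)|\ge 2^{(n+k+2)/2}$ is in fact false: for $g$ bent on $F_2^6$ the function $f(x)=g(x_1,\dots,x_6)+x_7+x_8$ is balanced with $cor(f)=1$ and $\max_u|W_f(u)|=2^5<2^{11/2}$. (The needed conclusion $\max_u|W_f(u)|\ge 2^{k+2}$ survives in this example, but not by your route.) The equality-case argument has the same defect: a nonzero coefficient of modulus strictly below $2^{k+2}$ would make the Parseval sum \emph{smaller}, not larger, so nothing ``overshoots'' $2^{2n}$; only divisibility excludes such a coefficient. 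Your closing idea of localizing to $(k+2)$-dimensional coordinate faces via Proposition~\ref{sbent2} is indeed the germ of the correct proof --- it is exactly how Lemma~\ref{lemmaplat2} proceeds, via Corollary~\ref{corplat1} --- but you use it only to count nonzero coefficients, which is the wrong quantity to control.
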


Therefore we can use plateaued functions to achieve tradeoff between
 nonlinearity and correlation immunity
 properties of Boolean functions.
 Constructions described in Proposition \ref{cplat13} save the property
 to be a binary plateaued function with maximal correlation immunity (see
 \cite{Tar1}).

Further, we  generalize Theorem \ref{tplat3} to the ternary
alphabet. Denote by $S^k=\{x\in F^k_3 : wt(x)=k\}$ the set of
ternary vectors of the maximal weight.

\begin{lemma}
For each $x\in S^k$  there exist a set of affine hyperplanes
$\{C_i\}$ of $F^k_3$ and a set of coefficients $\delta_i=\pm 1$ such
that a linear combination over $\mathbb{R}$ of the type
$\sum_i\delta_i\chi[C_i\cap S^k]$  is equal to $2^{k-1}$ on $x$ and
$0$ otherwise.
\end{lemma}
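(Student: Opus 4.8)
The plan is to prove the statement by induction on $k$, exploiting the product structure $S^k = S^{k-1}\times S^1$ together with the tensor/product behaviour of indicator functions on hyperplanes. For the base case $k=1$ we have $S^1=\{1,2\}\subset F_3$, and we simply take the single affine hyperplane $C_1=\{1\}$ (a point, which is an affine hyperplane of $F^1_3$) with $\delta_1=1$; then $\chi[C_1\cap S^1]$ equals $1=2^0$ on the chosen $x\in S^1$ and $0$ on the other element. (If $x=2$ take $C_1=\{2\}$ instead.) So a single term suffices.

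For the inductive step, write $x=(x',x_k)$ with $x'\in S^{k-1}$ and $x_k\in\{1,2\}$. By the induction hypothesis there are affine hyperplanes $\{C'_i\}$ in $F^{k-1}_3$ and signs $\delta'_i=\pm1$ with $\sum_i\delta'_i\chi[C'_i\cap S^{k-1}]$ equal to $2^{k-2}$ at $x'$ and $0$ on the rest of $S^{k-1}$; similarly there is a point $c\in F_3$ (namely $c=x_k$) with $\chi[\{c\}\cap S^1]$ equal to $1$ at $x_k$ and $0$ on the other weight-one coordinate value. The key observation is that the product of a weight-$k$ indicator pattern across the two factors is what we want: I would consider the hyperplanes $C_i^{(1)}=C'_i\times F_3$ and $C_i^{(2)}=C'_i\times\{c\}$ (wait — $C'_i\times F_3$ has the right dimension $k-1$, and $C'_i\times\{c\}$ has dimension $k-2$, which is \emph{not} a hyperplane). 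The correct move is instead to use the identity
$$\chi[A\cap S^{k-1}](x')\cdot\chi[\{c\}](x_k)=\tfrac12\bigl(\chi[(A\times F_3)\cap S^k]-\chi[(A\times\{c'\})\cap S^k]-\chi[(A\times\{c''\})\cap S^k]\bigr)$$
is again not hyperplane-shaped. So the honest approach: note $\chi[\{c\}](x_k)$ on $S^1$ can be written as a signed combination of $\chi$ of the two affine lines $\{x_k=1\}$ and $\{x_k=2\}$ of $F^k_3$ intersected with... no. Let me restate the real plan.

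The real plan is: for $x=(x',x_k)\in S^k$, I want to realize the function that is $2^{k-1}$ at $x$ and $0$ elsewhere on $S^k$ as $\bigl(\text{pattern at }x'\text{ on }S^{k-1}\bigr)\times\bigl(2\cdot\text{pattern at }x_k\text{ on }S^1\bigr)$, since $2^{k-1}=2^{k-2}\cdot 2$. The factor "$2\cdot\chi[\{x_k\}]$ on $S^1=\{1,2\}$" equals $\chi[S^1]+\bigl(\chi[\{x_k\}]-\chi[\{\text{other}\}]\bigr)$, and $\chi[\{x_k\}]-\chi[\{3-x_k\}]$ on $\{1,2\}$ is itself $\pm$ the restriction of a linear functional; more usefully, $2\chi[\{x_k\}]$ on $\{1,2\}$ is a $\pm1$-combination of $\chi$ of affine hyperplanes of $F^1_3$ (the two points, plus the whole line $F_3$): indeed $2\chi[\{1\}]=\chi[F_3]+\chi[\{1\}]-\chi[\{2\}]$ when restricted to $\{1,2\}$. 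Then I take the "product" of the two hyperplane-decompositions: each pair $(C'_i,\ D_j)$ with $D_j$ an affine hyperplane of $F_3$ yields the affine hyperplane $C'_i\times D_j$ of $F^k_3$ when $D_j=F_3$ (giving $C'_i\times F_3$, a hyperplane), or the affine subspace $C'_i\times D_j$ of dimension $k-2$ when $D_j$ is a point — which is \emph{not} a hyperplane, and this is exactly the obstacle.

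To get around it I would not take products of hyperplanes with points, but rather expand $2\chi[\{x_k\}]$ purely in terms of $\chi[F_3]$ and $\chi[\{x_k\}]-\chi[\{3-x_k\}]$ and observe that $C'_i\times\{x_k\}$ sits inside the hyperplane $C'_i\times F_3$ and inside another hyperplane of the form $\{$last coord $=x_k\}$-type slabs; more cleanly, use that for an affine hyperplane $A\subset F^{k-1}_3$ and a value $c\in F_3$, the set $A\times\{c\}$ is the intersection of the two hyperplanes $A\times F_3$ and $F^{k-1}_3\times\{c\}$ of $F^k_3$, so $\chi[A\times\{c\}]=\chi[A\times F_3]\cdot\chi[F^{k-1}_3\times\{c\}]$, and on $S^k$ one has the pointwise identity $\chi[(A\times\{c\})\cap S^k]=\chi[(A\times F_3)\cap S^k]+\chi[(F^{k-1}_3\times\{c\})\cap S^k]-\chi[((A\cup \text{comp})\dots)]$ — i.e. use inclusion–exclusion $\chi[U\cap V]=\chi[U]+\chi[V]-\chi[U\cup V]$ only when $U\cup V$ is again (the complement within $S^k$ of) a hyperplane, which happens here because on $S^k$ the complement of a hyperplane-slice is a union of the two parallel hyperplane-slices. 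Carrying this bookkeeping through, each term in the final expansion is $\pm\chi[C\cap S^k]$ for an affine hyperplane $C$ of $F^k_3$, and the number $2^{k-1}$ comes out as $2\cdot 2^{k-2}$.

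The step I expect to be the genuine obstacle is precisely this reduction of codimension-two slices $A\times\{c\}$ to signed sums of codimension-one slices \emph{after intersecting with $S^k$}: the identity is false on all of $F^k_3$ and only becomes true on $S^k$ because $S^k$ meets each coordinate direction in just two of the three parallel hyperplanes, so the "missing" third hyperplane contributes nothing. Making that cancellation precise — i.e. showing $\chi[(F^{k-1}_3\times\{1\})\cap S^k]+\chi[(F^{k-1}_3\times\{2\})\cap S^k]=\chi[S^k]$ and the analogous statements with an extra factor $A$ — and then assembling the signs $\delta_i$ correctly is the crux; everything else is the routine induction and the trivial one-variable identity $2\chi[\{c\}]=\chi[F_3]+\chi[\{c\}]-\chi[\{3-c\}]$ on $\{1,2\}$.
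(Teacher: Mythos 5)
Your base case and the overall inductive framing are fine, but the inductive step has a genuine gap at exactly the point you flag yourself: you never actually show that the codimension-two slices $C'_i\times\{c\}$ (which unavoidably appear once you expand $2\chi[\{x_k\}]=\chi[F_3]+\chi[\{x_k\}]-\chi[\{3-x_k\}]$ on $S^1$) can be rewritten as $\pm1$-combinations of \emph{hyperplane} slices of $S^k$. This is not bookkeeping --- it is the entire content of the lemma's inductive step. The specific device you propose, inclusion--exclusion $\chi[U\cap V]=\chi[U]+\chi[V]-\chi[U\cup V]$ with $U=A\times F_3$ and $V=F^{k-1}_3\times\{c\}$, does not terminate: on $S^k$ the complement of $U\cup V$ is $A^c\times\{3-c\}$, i.e.\ a union of \emph{two} new codimension-two slices $A'\times\{3-c\}$ and $A''\times\{3-c\}$ (where $A',A''$ are the cosets parallel to $A$), so each step of the reduction multiplies the number of codimension-two slices rather than eliminating them. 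Worse, a short computation with the ``diagonal'' hyperplanes $\{\langle a,w\rangle+\lambda y=b+\lambda c\}$ shows that what one naturally obtains in the span of hyperplane slices is $3\,\chi[(A\times\{c\})\cap S^k]$, with the factor $3$ being exactly incompatible with the $\pm1$ coefficients and the power $2^{k-1}$ you need. So the tensor-product route $S^k=S^{k-1}\times S^1$, as described, does not close.

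The paper avoids codimension-two sets altogether. Its inductive step takes the hyperplanes $C_i\subset F^k_3$ from the hypothesis and forms \emph{three} families of genuine hyperplanes of $F^{k+1}_3$: $R_1(M)=M\times F_3$, $R_2(M)=\{(x_1,\dots,x_{k-1},y,x_k+1)\}$, and $R_3(M)=\{(x_1,\dots,x_{k-1},x_k+y+2,y)\}$ --- i.e.\ cylinders over $M$ in three different directions mixing the last two coordinates. The three signed sums $\sum_i\delta_i\chi[R_j(C_i)\cap S^{k+1}]$ reproduce the old pattern on the three pair-sets $\{(\bar1^{k-1},1,1),(\bar1^{k-1},1,2)\}$, $\{(\bar1^{k-1},1,2),(\bar1^{k-1},2,2)\}$, $\{(\bar1^{k-1},1,1),(\bar1^{k-1},2,2)\}$, and the elementary identity $\chi[\{(1,1),(1,2)\}]-\chi[\{(1,2),(2,2)\}]+\chi[\{(1,1),(2,2)\}]=2\chi[\{(1,1)\}]$ on $S^2$ then isolates $2^{k}\chi[\{\bar1^{k+1}\}]$ with all coefficients still $\pm1$; the reduction to $x=\bar1^k$ is done by multiplying coordinates by $2$. (Even here one must check what the signed sum $\sum_i\delta_i\chi[C_i]$ does at points where the shifted coordinate lands on $0$, i.e.\ outside $S^k$ --- the same kind of boundary issue your construction would face --- so any completion of your argument would have to control the $C_i$ off $S^{k-1}$ as well, which your induction hypothesis does not provide.)
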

\begin{proof}
It is sufficient to prove the lemma for $x=\bar 1^k=(1,\dots,1)$
because we can replace each $x\in S^k$ by $\bar 1^k$ multiplying
variables by $2$. For $k=1$ the statement of lemma is obvious.
Assume that it holds for $k=n$, i.\,e., $2^{k-1}\chi[\{\bar
1^k\}\cap S^k]= \sum_i\delta_i\chi[C_i\cap S^k]$, where
$\delta_i=\pm 1$.

For an arbitrary affine hyperplane $M\subset F^n_3$ we determine
three affine hyperplanes in $F^{n+1}_3$:\\
 $R_1(M)=\{(x_1,\dots,x_n,y) : y\in F_3, (x_1,\dots,x_n)\in M\}$;\\
 $R_2(M)=\{(x_1,\dots,x_{n-1},y,x_n+1) : y\in F_3, (x_1,\dots,x_n)\in M\}$;\\
 $R_3(M)=\{(x_1,\dots,x_{n-1},x_n+y+2,y) : y\in F_3, (x_1,\dots,x_n)\in M\}$.

By induction assumption,  there exists a set affine hyperplanes
$C_i\subset F^n_3$ such that\\ $2^{k-1}\chi[\{\bar 1^k\}\cap S^k]=
\sum_i\delta_i\chi[C_i\cap S^k]$, $\delta_i=\pm 1$. Then we obtain
equations:\\
 $2^{k-1}\chi[\{(\bar 1^{k-1},1,y):y=1,2\}]=
\sum_i\delta_i\chi[R_1(C_i)\cap S^{k+1}]$;\\
$2^{k-1}\chi[\{(\bar 1^{k-1},y,2):y=1,2\}]=
\sum_i\delta_i\chi[R_2(C_i)\cap S^{k+1}]$;\\
$2^{k-1}\chi[\{(\bar 1^{k-1},y,y):y=1,2\}]=
\sum_i\delta_i\chi[R_3(C_i)\cap S^{k+1}]$. \\
It holds $2\chi[\{(\bar 1^{k-1},1,1)\}]=$\\
$=\chi[\{(\bar 1^{k-1},1,y):y=1,2\}]-\chi[\{(\bar
1^{k-1},y,2):y=1,2\}]+\chi[\{(\bar 1^{k-1},y,y):y=1,2\}]$. The step
of induction is proved.
\end{proof}

\begin{corol}\label{corplat1}
For each $x\in S^k$  there exists a set of affine hyperplanes
$\{C_i\}$ of $F^k_3$ such that
$2^{k-1}\chi[\{x\}]=\sum\limits_i\delta_i\chi[C_i]
+\sum\limits_{y\in F^k_3\backslash S^k}\varepsilon(y)\chi[\{y\}]$,
where $\delta_i$ and $\varepsilon(y)$ are integers.
\end{corol}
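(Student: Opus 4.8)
The plan is to derive Corollary~\ref{corplat1} directly from the Lemma by a short bookkeeping argument. The Lemma gives us, for a fixed $x\in S^k$, a set of affine hyperplanes $\{C_i\}$ of $F^k_3$ and signs $\delta_i=\pm1$ such that
$$
\sum_i\delta_i\chi[C_i\cap S^k]=2^{k-1}\chi[\{x\}]
$$
as a function on $S^k$. The difference between $\chi[C_i\cap S^k]$ and $\chi[C_i]$ is supported entirely on $F^k_3\setminus S^k$, since $\chi[C_i\cap S^k](y)=\chi[C_i](y)$ for every $y\in S^k$ and $\chi[C_i\cap S^k](y)=0\le\chi[C_i](y)$ for $y\notin S^k$. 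So I would write $\chi[C_i\cap S^k]=\chi[C_i]-\chi[C_i\setminus S^k]$ and substitute.

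First I would carry out that substitution to get
$$
2^{k-1}\chi[\{x\}]=\sum_i\delta_i\chi[C_i]-\sum_i\delta_i\chi[C_i\setminus S^k],
$$
valid as an identity of real-valued functions on all of $F^k_3$ (not just on $S^k$): on $S^k$ both sides were already shown equal by the Lemma, and off $S^k$ the left side is $0$ while on the right the two sums cancel term by term because $\chi[C_i]=\chi[C_i\setminus S^k]$ there. Next I would observe that $-\sum_i\delta_i\chi[C_i\setminus S^k]$ is a function supported on $F^k_3\setminus S^k$ taking integer values (each $\delta_i\in\{\pm1\}$ and the characteristic functions take values in $\{0,1\}$), so it can be rewritten as $\sum_{y\in F^k_3\setminus S^k}\varepsilon(y)\chi[\{y\}]$ where $\varepsilon(y)=-\sum_{i:\,y\in C_i,\,y\notin S^k}\delta_i\in\mathbb{Z}$. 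Together with the fact that the $\delta_i$ are already integers, this is exactly the claimed decomposition.

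There is no real obstacle here: the entire content is in the Lemma, and the Corollary is just the translation from "$\chi[C_i\cap S^k]$" to "$\chi[C_i]$ plus an integer correction off $S^k$". The only point requiring a word of care is that the stated identity must be interpreted over all of $F^k_3$, which is why the correction term is needed in the first place; I would make that explicit. One could also note in passing that $\sum_i\delta_i\chi[C_i]$ is, up to the integer combination of point masses, a signed sum of indicators of affine hyperplanes, which is the form needed for the subsequent application to correlation immunity via formula~(\ref{eplat2}).
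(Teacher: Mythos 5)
Your proposal is correct and matches the paper's (implicit) reasoning: the paper states Corollary~\ref{corplat1} without a separate proof, treating it as immediate from the Lemma, and your substitution $\chi[C_i\cap S^k]=\chi[C_i]-\chi[C_i\setminus S^k]$ with the integer correction $\varepsilon(y)=-\sum_{i:\,y\in C_i}\delta_i$ supported on $F^k_3\setminus S^k$ is exactly the intended bookkeeping. Nothing is missing.
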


Since $1+\xi+\xi^2=0$, a representation of an Eisenstein integer in
the form $a_0+a_1\xi+a_2\xi^2$ is not unique. Moreover, three
irreducible representations of an Eisenstein integer in forms
$N=b_{20}+b_{21}\xi=b_{10}+b_{12}\xi^2=b_{01}\xi+b_{02}\xi^2$ are
unique. Consider an Eisenstein integer $N=p(a_0+a_1\xi+a_2\xi^2)$,
where $p,a_0,a_1,a_2$ are integers. Then
$$N=p(a_0-a_2)+p(a_1-a_2)\xi=p(a_0-a_1)+p(a_2-a_1)\xi^2=p(a_1-a_0)\xi+p(a_2-a_0)\xi^2.$$
Consequently, if there exists a representation of Eisenstein integer
such that its coefficients are divisible by $p$ then coefficients of
all irreducible  representations are divisible by $p$. An Eisenstein
integer $N$ of such type is called divisible by $p$.

\begin{lemma}\label{lemmaplat2}
Let $f$ be a balanced ternary function on $F_3^n$. Then each
coefficient $W_f(z)$ is divisible by $3^{cor(f)}$.
\end{lemma}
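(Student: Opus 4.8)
The plan is to combine the correlation-immunity characterization in formula~(\ref{eplat2}) with the divisibility machinery just set up. Fix $z\in F_3^n$; I want to show $3^{cor(f)}\mid W_f(z)$ as an Eisenstein integer. Write $k=cor(f)+1$. By~(\ref{eplat2}), every $u$ with $W_f(u)\neq 0$ has $wt(u)\geq k$, so in particular there is nothing to prove unless $wt(z)\geq k$ (if $wt(z)<k$ then $W_f(z)=0$). Assume $wt(z)\geq k$. The idea is to realize the single Fourier coefficient $W_f(z)$ as a $3^{k-1}$-multiple of an integer combination of values that each vanish because of correlation immunity.

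First I would invoke Corollary~\ref{corplat1} applied to the vector $x$ formed by a weight-$k$ sub-pattern of $z$ on $k$ coordinates. More precisely, let $J$ be a set of $k$ coordinates on which $z$ is nonzero, let $x=(z_j)_{j\in J}\in S^k$, and let $\Gamma_i$ be the affine subspaces of $F_3^n$ obtained by pulling back the hyperplanes $C_i\subset F_3^J$ of Corollary~\ref{corplat1} (i.e. $\Gamma_i=\{v\in F_3^n:\ v|_J\in C_i\}$), so each $\Gamma_i$ has codimension $1$. Then Corollary~\ref{corplat1} gives, as an identity of integer-valued functions on $F_3^n$,
$$2^{k-1}\chi[\{v:\ v|_J=x\}] = \sum_i\delta_i\chi[\Gamma_i] + \sum_{y}\varepsilon_y\,\chi[\{v:\ v|_J=y\}],$$
where $y$ ranges over $F_3^J\setminus S^k$ and all coefficients are integers. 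Now I would apply the linear functional $h\mapsto \sum_{v}h(v)\,\xi^{-\langle v,z\rangle}$ to both sides. On the right, $\sum_v \chi[\Gamma_i](v)\xi^{-\langle v,z\rangle}$ is a character sum over an affine subspace; since $z$ is nonzero on $J$, the restriction of $\langle\cdot,z\rangle$ to each $\Gamma_i$ is a nonconstant affine function (the $C_i$ are hyperplanes in the $J$-coordinates and $z$ is supported off the trivial ones), so each such sum is $3^{n-1}\xi^{c_i}$ for a suitable constant, or more carefully an Eisenstein integer of the shape $3^{n-1}\cdot(\text{unit or }0)$ — and crucially it is divisible by $3^{n-1}$. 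Similarly each term $\sum_v\chi[\{v:v|_J=y\}](v)\xi^{-\langle v,z\rangle}$ with $wt(y)<k$ factors as a product over coordinates and contains at least one factor $\sum_{t\in F_3}\xi^{-t z_j}$ with $z_j\neq 0$, which is $0$ by Proposition~\ref{sbent3}(1); so those terms vanish. Hence the right-hand side is an Eisenstein integer divisible by $3^{n-1}$.

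The harder bookkeeping is the left-hand side: I must show that $\sum_{v:\ v|_J=x}\xi^{-\langle v,z\rangle}$, weighted appropriately, reconstitutes $W_f(z)$ up to divisibility. This is not literally $W_f(z)$, so the clean route is instead to apply the functional $h\mapsto\sum_v f^{\sharp}(v)h(v)$ where $f^{\sharp}(v)=\xi^{f(v)}$, i.e. to pair the displayed identity against $\xi^{f}$ rather than against a character. Then the left side becomes $2^{k-1}\sum_{v:v|_J=x}\xi^{f(v)}$; the $\Gamma_i$-terms become $\sum_{v\in\Gamma_i}\xi^{f(v)}$, each of which — since $\Gamma_i$ has codimension $1$ hence is a union of $3^{n-1}$-size fibers, and by correlation immunity of order $cor(f)=k-1$ the restriction of $f$ to any affine subspace of dimension $\geq n-(k-1)$ is balanced in a controlled way — I would push through Proposition~\ref{sbent2}/equation~(\ref{eq005}) to express $\sum_{v\in\Gamma_i}\xi^{f(v)}$ in terms of $W_f$ values at points of weight $<k$, which are $0$. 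That forces $\sum_{v:v|_J=x}\xi^{f(v)}$ to be divisible by $3^{k-1}$ as an Eisenstein integer (using the divisibility notion defined above and the non-uniqueness remark). Finally, summing such fiber-sums with the character weights $\xi^{-\langle v,z\rangle}$ over all $x\in S^k$ on the coordinate block reassembles $W_f(z)=\sum_v\xi^{f(v)-\langle v,z\rangle}$; the low-weight contributions again die by Proposition~\ref{sbent3}(1), and we conclude $3^{k-1}=3^{cor(f)}$ divides $W_f(z)$.

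The main obstacle I anticipate is precisely this last reassembly: keeping track of which affine subspaces have $\langle\cdot,z\rangle$ constant versus nonconstant, and making sure the $3^{k-1}$ divisibility survives the sum over the $3^k-2^k$ vectors of sub-maximal weight rather than being diluted. The cleanest formulation is probably inductive on $n$ mirroring the induction in the Lemma, or alternatively to prove directly that for every affine subspace $\Gamma$ of codimension $<cor(f)+1$ the sum $\sum_{v\in\Gamma}\xi^{f(v)}$ vanishes (immediate from~(\ref{eq005}) and~(\ref{eplat2})), then feed that into Corollary~\ref{corplat1}. I would also double-check the edge case $cor(f)=0$, where the claim is vacuous, and the balanced hypothesis, which is exactly what guarantees $W_f(\bar 0)=0$ so that the weight-based formula~(\ref{eplat2}) applies.
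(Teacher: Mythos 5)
Your overall strategy --- apply Corollary \ref{corplat1} on the support of $z$ and extract divisibility by a power of $3$ --- is the right starting point, but you apply the identity on the wrong side of the Fourier transform, and the step that is supposed to make the subspace terms harmless is false. Writing $k=cor(f)+1$, you pair the indicator identity against $\xi^{f}$ and claim that each term $\sum_{v\in\Gamma_i}\xi^{f(v)}$, where $\Gamma_i$ is the cylinder over an affine hyperplane $C_i\subset F_3^{J}$, can be rewritten via (\ref{eq005}) in terms of $W_f$ at points of weight $<k$ and therefore vanishes. But $\Gamma_i^{\perp}$ is spanned by a vector $w$ supported on $J$ whose weight can be exactly $|J|=k=cor(f)+1$, and $\sum_{v\in\Gamma_i}\xi^{f(v)}=\tfrac13\bigl(W_f(0)+\xi^{c}W_f(w)+\xi^{-c}W_f(-w)\bigr)$ involves precisely the Walsh coefficients of weight $k$ that the lemma is trying to control; the argument is circular. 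The same problem hits the terms $\sum_{v:\,v|_J=y}\xi^{f(v)}$ with $wt(y)<k$: the set $\{v:v|_J=y\}$ is a face of codimension $k=cor(f)+1$, about which correlation immunity says nothing, and its Fourier expansion again involves $W_f$ at weight-$k$ points supported on $J$. Your fallback claim, that $\sum_{v\in\Gamma}\xi^{f(v)}=0$ for every affine subspace of codimension $\le cor(f)$, is also false: correlation immunity forces balancedness only on coordinate faces, not on arbitrary affine subspaces (take $f(x)=\langle x,c\rangle$ with $wt(c)=n$, so $cor(f)=n-1$, and $\Gamma=\{x:\langle x,c\rangle=0\}$ of codimension $1$; the sum equals $3^{n-1}$).

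The fix is to apply Corollary \ref{corplat1} in the Walsh domain rather than the function domain, which is what the paper does. With $J={\rm supp}(z)$ and $m=wt(z)$, identify $z|_J$ with a point of $S^{m}$ and expand $2^{m-1}\chi[\{z\}]=\sum_i\delta_i\chi[C_i]+\sum_{wt(y)<m}\varepsilon(y)\chi[\{y\}]$, where the $C_i$ are now $(m-1)$-dimensional affine subspaces of the \emph{frequency} space. Evaluating $W_f$ against this gives $2^{m-1}W_f(z)=\sum_i\delta_i\sum_{y\in C_i}W_f(y)+\sum_{wt(y)<m}\varepsilon(y)W_f(y)$. Equation (\ref{eq005}) then shows, with no appeal to correlation immunity at all, that $\sum_{y\in a+\Gamma}W_f(y)=3^{\dim\Gamma}\sum_{x\in\Gamma^{\perp}}\xi^{f(x)-\langle x,a\rangle}$, i.e.\ every coset sum of $W_f$ over an $(m-1)$-dimensional subspace is an Eisenstein integer divisible by $3^{m-1}\ge 3^{cor(f)}$. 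Correlation immunity is used only to annihilate the coefficients of weight $\le cor(f)$ in the base case $m=cor(f)+1$, and the remaining lower-weight terms are handled by induction on $m$. Since $2^{m-1}$ is coprime to $3$, the divisibility passes to $W_f(z)$ itself.
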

\begin{proof}
From (\ref{eplat2}) it follows that  $W_f(z)=0$ if $wt(z)\leq
cor(f)=k$. From (\ref{eq005}) it follows that $ \sum\limits_{y\in
a+\Gamma}W_f(y)=3^{k}\sum\limits_{x\in
\Gamma^\bot}{f}(x)\xi^{-\langle x,a\rangle}$ for each $a\in F^n_3$
and for each $k$-dimensional subspace $\Gamma$. Consider $z\in
F^n_3$ such that $wt(z)=k+1$. By Corollary \ref{corplat1}, we obtain
that $2^{k}W_f(z)=\sum\limits_i\sum_{y\in C_i}\delta_iW_f(y)
+\sum\limits_{wt(y)<k}\varepsilon(y)W_f(y)$. Therefore,
$2^{k}W_f(z)=3^kN$, where $N$ is an Eisenstein integer. Hence the
Eisenstein integer $W_f(z)$ is divisible by $3^k$.

Let us prove  by induction that Eisenstein integers $W_f(x)$ are
divisible by $3^k$ for $wt(x)=m$, $m=k+1,\dots,n$. By Corollary
\ref{corplat1} we obtain that $2^{k}W_f(x)=\sum\limits_i\sum_{y\in
C_i}\delta_iW_f(y) +\sum\limits_{wt(y)<m}\varepsilon(y)W_f(y)$,
where $C_i$ are $k$-dimensional affine subspaces.  From
(\ref{eq005}) it follows that the sum $\sum\limits_{y\in
C_i}\delta_iW_f(y)$ is divisible by $3^k$. If $wt(y)<m$ then
coefficients $W_f(y)$ are divisible by $3^k$ by the induction
assumption.
\end{proof}

\begin{theorem} Let $f$ be a balanced ternary function on
$F_3^n$. Then $nl(f)\leq 2\cdot3^{n-1}-3^{cor(f)-1}$. If $nl(f)=
2\cdot3^{n-1}-3^{cor(f)-1}$ then $f$ is a plateaued function.
\end{theorem}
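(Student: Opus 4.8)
The plan is to read $nl(f)$ off the Walsh--Hadamard spectrum via Proposition~\ref{proplat11} and then feed in the divisibility of the spectral coefficients supplied by Lemma~\ref{lemmaplat2}. Write $k=cor(f)$. By Proposition~\ref{proplat11}, $nl(f)=2\cdot 3^{n-1}-\frac{2}{3}R$ with $R=\max_{a\in F_3,\,y\in F_3^n}Re(\xi^a W_f(y))$, so the claimed inequality $nl(f)\le 2\cdot 3^{n-1}-3^{k-1}$ is exactly the lower bound $R\ge 3^{k}/2$. To get it: by the Parseval identity $\sum_y|W_f(y)|^2=3^{2n}$, some coefficient $W_f(y)$ is nonzero; fix such a $y$. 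By Lemma~\ref{lemmaplat2}, $W_f(y)$ is an Eisenstein integer divisible by $3^k$, and a nonzero Eisenstein integer has absolute value at least $1$, so $|W_f(y)|\ge 3^k$. Now apply the elementary estimate $\max_{a\in F_3}Re(\xi^a N)\ge|N|/2$ (proved inside Proposition~\ref{proplat13}): $R\ge\max_a Re(\xi^a W_f(y))\ge |W_f(y)|/2\ge 3^k/2$, as required.

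For the equality case, assume $nl(f)=2\cdot 3^{n-1}-3^{k-1}$, equivalently $R=3^k/2$. Then for \emph{every} $y$ with $W_f(y)\neq 0$ the chain $3^k/2\le |W_f(y)|/2\le\max_a Re(\xi^a W_f(y))\le R=3^k/2$ is forced to be a chain of equalities; in particular $|W_f(y)|=3^k$. Hence $|W_f(y)|\in\{0,3^k\}$ for all $y\in F_3^n$, which by definition says $f$ is a plateaued function with $\mu=3^k$. (The Parseval count $3^{2n}=\mu^2|\mathrm{supp}(W_f)|$ then pins down $s=2k-n$, and the equality clause of $\max_a Re(\xi^a N)\ge|N|/2$ shows in addition that each nonzero $W_f(y)$ equals $-3^k\xi^a$ for some $a\in F_3$.)

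The substantive content is Lemma~\ref{lemmaplat2} together with the combinatorial identity behind it (the preceding Lemma and Corollary~\ref{corplat1}), and that is already in place; the theorem above is a short deduction. The two points needing a little care are: verifying that at least one Walsh--Hadamard coefficient is nonzero (immediate from Parseval, since balancedness of $f$ by itself only forces $W_f(\bar 0)=0$), and, in the equality case, observing that $R=3^k/2$ being the smallest possible value simultaneously constrains \emph{all} nonzero coefficients, not merely the single one used to establish the inequality. I do not anticipate a genuine obstacle beyond this bookkeeping.
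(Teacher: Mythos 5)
Your argument is correct and follows the paper's own route: read $nl(f)$ off the spectrum via Proposition~\ref{proplat11}, invoke the divisibility of $W_f(z)$ by $3^{cor(f)}$ from Lemma~\ref{lemmaplat2}, and use $\max_{a\in F_3}Re(\xi^aN)\geq|N|/2$; the only cosmetic difference is that you pass through $|W_f(y)|\geq 3^{cor(f)}$ (nonzero Eisenstein integer has modulus at least $1$) where the paper computes the three real parts as half-integer multiples of $3^{cor(f)}$ directly. Your handling of the equality case (forcing $|W_f(y)|=3^{cor(f)}$ for every nonzero coefficient) matches the paper's conclusion as well.
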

\begin{proof}
By Lemma \ref{lemmaplat2}, it holds
$\xi^bW_f(z)=3^{cor(f)}(a_0\xi^b+a_1\xi^{1+b}+a_2\xi^{2+b})$ for
each $z\in F^n_3$. Then we get that
$Re(W_f(z))=3^{cor(f)}\frac12(2a_0-a_1-a_2)$, $ Re(\xi
W_f(z))=3^{cor(f)}\frac12(2a_2-a_1-a_0)$ and
$Re(\xi^2W_f(z))=3^{cor(f)}\frac12(2a_1-a_0-a_2)$. If $W_f(z)\neq 0$
then $\max_{b\in F_3}(Re(\xi^bW_f(z)))\geq 3^{cor(f)}/2$.  By
Proposition \ref{proplat11}, we obtain that $nl(f)\leq
2\cdot3^{n-1}-3^{cor(f)-1}$.

It is clear that if $\max_{b\in F_3}(Re(\xi^bN))= M/2$ and an
Eisenstain integer $N$ is divisible by integer $M$ then $N=-M\xi^b$,
where $b\in F_3$. Hence if $\max_{b\in F_3}(Re(\xi^bW_f(z)))=
3^{cor(f)}/2$ then $W_f(z)=-3^{cor(f)}\xi^b$, where $b\in F_3$.
Consequently, $f$  is a plateaued function by the definition.
\end{proof}

\section{Nonlinearity of $q$-ary functions as $q\geq 4$}

In this section we prove that  for $q\geq 4$  there exist $q$-ary
functions $f$ of $n$ variables with
 $cor(f)=n-1$ and with a large nonlinearity.
 By definitions, arbitrary  $n$-ary quasigroup $f$ of order $q$ is
  a $q$-ary function of $n$ variables with
 $cor(f)=n-1$  and each balanced $q$-ary function of $n$ variables with
 $cor(f)=n-1$ is an $n$-ary quasigroup $f$ of order $q$. It is well
 known that all $n$-ary quasigroup $f$ of order $3$ are isotopic to
a linear $n$-ary quasigroup (see, for example, \cite{PK}). For
$q\geq 4$ there exist nonlinear  quasigroups (latin squares). Define
an $n$-ary quasigroup $H$ of order $q$ by the equality
$H(x_1,\dots,x_n)=h(x_1,h(x_2,\dots, h(x_{n-1},x_n)\dots)$. It is
easy to see that $d(H,\widetilde{A}_{n,q})\geq
d(h,\widetilde{A}_{2,q})q^{n-2}$.  If $h$ is not isotopic to a
linear quasigroup then $d(h,\widetilde{A}_{2,q})\geq 4$ because
$d(h,f)\geq 4$ for any two different quasigroups $h$ and $f$.
Further, we prove more tight bound on the strong nonlinearity in the
case $q=4$.

 Consider elements of $F_4$ as pairs of elements of $F_2$.
Each semilinear $n$-ary quasigroup of order
 $4$ can be define (up to isotopy)  by the following formula (see \cite{PK})
\begin{equation}\label{eqplat33}
  f((x_1,y_1),\dots,(x_n,y_n))=(x_1\oplus\dots\oplus
 x_n,b(y_1,\dots,y_n)),
 \end{equation}
 where $b:F_2^n\rightarrow F_2$ is an appropriate Boolean function.
\begin{pro}
If a function $f$ is defined by (\ref{eqplat33}) with the help of a
Boolean bent function $b$ ($n$ is even) then
$\widetilde{nl}(f)=2^n(2^{n-1}-2^{\frac{n}{2}-1})$.
\end{pro}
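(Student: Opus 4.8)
The plan is to compute the strong nonlinearity $\widetilde{nl}(f)$ directly, by relating the distance from $f$ to an isotope of an affine function to the distance from the Boolean component $b$ to a Boolean affine function. Write elements of $F_4^n$ as pairs $((x_1,y_1),\dots,(x_n,y_n))$ with $x_i,y_i\in F_2$, so that $F_4^n\simeq F_2^n\times F_2^n$ via $((x),(y))$, and recall $f((x),(y))=(x_1\oplus\dots\oplus x_n,\,b(y_1,\dots,y_n))$. First I would observe that since $b$ is a bent function on $F_2^n$, by the Boolean nonlinearity formula (\ref{eplat2}) and the bent property $|W_b(u)|=2^{n/2}$, we have $nl(b)=2^{n-1}-2^{n/2-1}$; the upper bound $\widetilde{nl}(f)\le 2^n nl(b)=2^n(2^{n-1}-2^{n/2-1})$ follows from the general product inequality $d(H,\widetilde A_{n,q})\ge d(h,\widetilde A_{2,q})q^{n-2}$ style reasoning used just above, or more directly: an affine function $\ell$ on $F_4^n$ restricted to each fiber $\{(x)\}\times F_2^n$ (with $(x)$ fixed) is, in the second coordinate, either an affine Boolean function of $(y)$ or constant, so $f$ agrees with $\ell$ on such a fiber on at most $2^n-nl(b)$ points when $\ell$'s first coordinate matches $x_1\oplus\cdots\oplus x_n$, and on at most $2^n$ points but then necessarily fewer because of the first-coordinate constraint — I would make this bookkeeping precise to get $d(f,\ell)\ge 2^n nl(b)$ for every affine $\ell$, and then argue the isotopy class $\widetilde A_{n,4}$ gains nothing beyond this, using the fact that all affine functions over $F_4$ and their isotopes act coordinatewise in a controlled way.

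The reverse inequality $\widetilde{nl}(f)\le 2^n(2^{n-1}-2^{n/2-1})$ is exhibited by a single good isotope: take a Boolean affine function $\ell_0$ with $d(b,\ell_0)=nl(b)$, and build the $F_4$-affine-isotopic function $g((x),(y))=(x_1\oplus\cdots\oplus x_n,\,\ell_0(y))$; since the first coordinates always agree, $d(f,g)=2^n\,d(b,\ell_0)=2^n nl(b)$. Combined with the lower bound this pins down $\widetilde{nl}(f)=2^n(2^{n-1}-2^{n/2-1})$. I would also double-check that $g$ is genuinely in $\widetilde A_{n,4}$: the map $(u,v)\mapsto(u,\sigma(v))$ type constructions on $F_4$ coordinates that realize Boolean-affine $\ell_0$ in the second slot are isotopies of the affine function $(x),(y)\mapsto(x_1\oplus\cdots\oplus x_n,\,y_1\oplus\cdots\oplus y_n)$, which is linear over $F_4$ after the identification — citing \cite{PK} for the classification of semilinear quasigroups as in (\ref{eqplat33}).

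The main obstacle will be the lower bound: showing that no isotope of an affine function — not merely no genuine $F_4$-affine function — can beat $2^n nl(b)$. The subtlety is that isotopy allows independent permutations $\tau_0,\dots,\tau_n$ on each coordinate of $F_4$, and one must verify that after composing with such permutations the interaction between the two $F_2$-layers cannot be exploited to approximate the bent function $b$ better than by a Boolean affine function. I would handle this by analyzing the structure of $\widetilde A_{n,4}$ concretely: any isotope of a linear $F_4$-quasigroup, written in the $F_2\times F_2$ model, decomposes the second output coordinate as a function of the $y_i$'s (and possibly $x_i$'s) that is affine in a suitable sense, so the fiberwise argument goes through; the worst case is precisely realized by $g$ above. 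If a fully general argument is awkward, the fallback is to invoke the already-stated bound $d(H,\widetilde A_{n,q})\ge d(h,\widetilde A_{2,q})q^{n-2}$ together with a direct computation that $d(h,\widetilde A_{2,4})=4$ is too weak, so instead one uses the layered structure of (\ref{eqplat33}) directly rather than the generic iterated-quasigroup bound.
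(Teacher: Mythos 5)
Your overall architecture matches the paper's: an explicit layer-respecting approximant $g((x),(y))=(x_1\oplus\dots\oplus x_n,\,l(y))$ with $l$ a best Boolean affine approximation of $b$ gives the upper bound $2^n\,nl(b)=2^n(2^{n-1}-2^{\frac{n}{2}-1})$ (and taking $l(y)=y_1\oplus\dots\oplus y_n$ or its complement, which works since $b$ is bent, makes $g$ visibly an isotope of $\ell_0$). For a genuine $F_4$-affine $\ell$ your fiberwise bookkeeping also closes: on each fiber $\{(x)\}\times F_2^n$ both output $F_2$-coordinates of $\ell$ are Boolean-affine in $(y)$, so the number of agreements per fiber is at most $\max(2^{n-1},\,2^n-nl(b))=2^n-nl(b)$, whence $d(f,\ell)\ge 2^n\,nl(b)$.

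The gap sits exactly where you flag the main obstacle, and your proposed resolution of it does not work. An isotope $\tau_0\circ\ell\circ(\tau_1,\dots,\tau_n)$ with arbitrary permutations $\tau_i$ of $F_4$ does not in general have a second output coordinate that is ``affine in a suitable sense'' on the fibers; worse, the inner permutations scramble the fibers themselves, since the first $F_2$-coordinate of $\tau_i(x_i,y_i)$ may depend on $y_i$. So the fiberwise argument does not ``go through'' as stated. The paper resolves this with a dichotomy taken from \cite{PK} for isotopes of $\ell_0$ (plus the separate observation that any affine function, or isotope thereof, missing some coordinate is at distance $3\cdot 4^{n-1}>2^n\,nl(b)$ from any quasigroup): either $\ell((0,y_1),\dots,(0,y_n))\in\{(0,0),(0,1)\}$ for all $y$, in which case $\ell$ literally has the form $(x_1\oplus\dots\oplus x_n,\,l(y))$ with $l$ Boolean affine and $d(f,\ell)=2^n d(b,l)\ge 2^n\,nl(b)$; or not, in which case on every fiber exactly $2^{n-1}$ of the $y$'s give first output coordinate equal to $x_1\oplus\dots\oplus x_n$, so $d(f,\ell)\ge 2^n\cdot 2^{n-1}$, which exceeds the target with no reference to $b$ at all. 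Note that in this second case the saving comes from the \emph{first} output coordinate being balanced on fibers, not from any affinity of the second; this balancedness of isotopes of $\ell_0$ is the structural fact you would need to extract from \cite{PK} (or reprove) to complete your lower bound.
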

\begin{proof}
It is easy to see that if an affine function does not depend on one
or more coordinates then the distance between an $n$-ary quasigroup
of order $4$ and such an affine function is equal to $3\cdot
4^{n-1}$.

If an affine function $\ell:F^n_4\rightarrow F_4$ depends on all
coordinates then it is isotopic to the function
$\ell_0((x_1,y_1),\dots,(x_n,y_n))=(x_1\oplus\dots\oplus
 x_n,y_1\oplus\dots\oplus y_n)$.

Consider two cases 1) $\ell((0,y_1),\dots,(0,y_n))\in
\{(0,1),(0,0)\}$ for all $(y_1,\dots,y_n)\in F^n_2$, 2) otherwise.

For the first case we obtain that
$\ell((x_1,y_1),\dots,(x_n,y_n))=(x_1\oplus\dots\oplus
 x_n,l(y_1,\dots,y_n))$, where $l:F^n_2\rightarrow F_2$ is an affine
 Boolean function (see \cite{PK}). Then $d(f,\ell)=2^nd(b,l)\geq 2^n(2^{n-1}-2^{\frac{n}{2}-1})$.

In the second case we get that $|\{(y_1,\dots,y_n)
:\ell((\delta_1,y_1),\dots,(\delta_n,y_n))\in \{(0,1),(0,0)\}\}|=$
  $ |\{(y_1,\dots,y_n)
:\ell((\delta_1,y_1),\dots,(\delta_n,y_n))\not \in
\{(1,1),(1,0)\}\}|=2^{n-1}$ for all  $(\delta_1,\dots,
 \delta_n)\in F^n_2$ (see \cite{PK}). Then
 $d(f,\ell)\geq2^n\cdot2^{n-1}$.
\end{proof}

\end{document}